\theoremstyle{plain}
\newtheorem{theorem}{Theorem}[section]
\newtheorem{lemma}[theorem]{Lemma}
\newtheorem{claim}[theorem]{Claim}
\theoremstyle{definition}
\newtheorem{definition}[theorem]{Definition}
\newcommand{\qedsymb}{\hfill{\rule{2mm}{2mm}}}
\renewenvironment{proof}{\begin{trivlist} \item[\hspace{\labelsep}{\bf \noindent Proof.\/}] }{\qedsymb\end{trivlist}}%
\newcommand{\bbN}{\mathbb{N}}
\newcommand{\name}[1]{#1}
\newcommand{\opt}{\mathrm{OPT}}
\newcommand{\alg}{\mathrm{ON}}
\def\Context{PAPER}
\begin{document}

\begin{titlepage}
\title{Offline and Online Models of Budget Allocation for\\Maximizing Influence Spread}

\author{Noa Avigdor-Elgrabli\thanks{Yahoo Labs, Haifa, Israel. Email: {\tt noaa@oath.com}. }
	\and Gideon Blocq\thanks{Technion, Haifa, Israel. Email: {\tt gideon@alumni.technion.ac.il}. This work was done when the author was an intern at Yahoo Labs, Haifa, Israel.} 
	\and Iftah Gamzu\thanks{Amazon, Israel. Email: {\tt iftah.gamzu@yahoo.com}. This work was done when the author was affiliated with Yahoo Labs, Haifa, Israel.}
	\and Ariel Orda\thanks{Technion, Haifa, Israel. Email: {\tt ariel@ee.technion.ac.il}.}
	}
\date{}
\maketitle

\begin{abstract}
The research of influence propagation in social networks via word-of-mouth processes has been given considerable attention in recent years. Arguably, the most fundamental problem in this domain is the influence maximization problem, where the goal is to identify a small seed set of individuals that can trigger a large cascade of influence in the network. While there has been significant progress regarding this problem and its variants, one basic shortcoming of the underlying models is that they lack the flexibility in the way the overall budget is allocated to different individuals. Indeed, budget allocation is a critical issue in advertising and viral marketing. Taking the other point of view, known models allowing flexible budget allocation do not take into account the influence spread in social networks. We introduce a generalized model that captures both budgets and influence propagation simultaneously. 

For the offline setting, we identify a large family of natural budget-based propagation functions that admits a tight approximation guarantee. This family extends most of the previously studied influence models, including the well-known Triggering model. We establish that any function in this family implies an instance of a monotone submodular function maximization over the integer lattice subject to a knapsack constraint. This problem is known to admit an optimal $1-1/e \approx 0.632$-approximation. We also study the price of anarchy of the multi-player game that extends the model and establish tight results. 

For the online setting, in which an unknown subset of agents arrive in a random order and the algorithm needs to make an irrevocable budget allocation in each step, we develop a $1/(15e) \approx 0.025$-competitive algorithm. This setting extends the celebrated secretary problem, and its variant, the submodular knapsack secretary problem. Notably, our algorithm improves over the best known approximation for the latter problem, even though it applies to a more general setting.
\end{abstract}	

\thispagestyle{empty}
\end{titlepage}

\section{Introduction}
The study of information and influence propagation in societies has received increasing attention for several decades in various areas of research. Recently, the emergence of online social networks brought forward many new questions and challenges regarding the dynamics by which information, ideas, and influence spread among individuals. One central algorithmic problem in this domain is the \textit{influence maximization} problem, where the goal is to identify a small seed set of individuals that can trigger a large word-of-mouth cascade of influence in the network. This problem has been posed by \name{Domingos and Richardson}~\cite{DomingosR01,RichardsonD02} in the context of viral marketing. The premise of viral marketing is that by targeting a few influential individuals as initial adopters of a new product, it is possible to trigger a cascade of influence in a social network. Specifically, those individuals are assumed to recommend the product to their friends, who in turn recommend it to their friends, and so on.

The influence maximization problem was formally defined by \name{Kempe, Kleinberg and Tardos}~\cite{KempeKT03,KempeKT05}. In this setting, we are given a social network graph, which represents the individuals and the relationships between them. We are also given an influence function that captures the expected number of individuals that become influenced for any given subset of initial adopters. Given some budget $b$, the objective is to find a seed set of $b$ initial adopters that will maximize the expected number of influenced individuals. Kempe et al.\ studied several operational models representing the step-by-step dynamics of propagation in the network, and analyzed the influence functions that are derived from them. While there has been significant progress regarding those models and related algorithmic issues, one shortcoming that essentially has not been treated is the lack of flexibility in the way that the budget is allocated to the individuals. Indeed, budget allocation is a critical factor in advertising and viral marketing. This raises some concerns regarding the applicability of current techniques.

Consider the following scenario as a motivating example. A new daily deals website is interested in increasing its exposure to new audience. Consequently, it decides to provide discounts to individuals who are willing to announce their purchases in a social network. The company has several different levels of discounts that it can provide to individuals to incentivize them to better communicate their purchases, e.g., making more enthusiastic announcements on their social network. The company hopes that those announcements will motivate the friends of the targeted individuals to visit the website, so a word-of-mouth process will be created. The key algorithmic question for this company is whom should they offer a discount, and what amount of discounts should be offered to each individual.

\name{Alon et al.}~\cite{AlonGT12} have recently identified the insufficiency of existing models to deal with budgets. They introduced several new models that capture issues related to budget distribution among potential influencers in a social network. One main caveat in their models is that they do not take into account the influence propagation that happens in the network. The main aspect of our work targets this issue.

\subsection{Our results}
We introduce a generalized model that captures both budgets and influence propagation simultaneously. Our model combines design decisions taken from both the budget models~\cite{AlonGT12} and propagation models~\cite{KempeKT03}. The model interprets the budgeted influence propagation as a two-stage process consisting of: (1) influence related directly to the budget allocation, in which the seed set of targeted individuals influence their friends based on their budget, and (2) influence resulting from a secondary word-of-mouth process in the network, in which no budgets are involved. Note that the two stages give rise to two influence functions whose combination designates the overall influence function of the model. We study our model in both offline and online settings. 

\smallskip \noindent {\bf An offline setting.} We identify a large family of natural budget-based propagation functions that admits a tight approximation guarantee. Specifically, we establish sufficient properties for the two influence functions mentioned above, which lead to a resulting influence function that is both monotone and submodular. It is important to emphasize that the submodularity of the combined function is not the classical set-submodularity, but rather, a generalized version of submodularity over the integer lattice. Crucially, when our model is associated with such an influence function, it can be interpreted as a special instance of a {\em monotone submodular function maximization over the integer lattice subject to a knapsack constraint}. This problem is known to have an efficient algorithm whose approximation ratio of $1-1/e \approx 0.632$, which is best possible under P $\neq$ NP assumption ~\cite{SomaKIK14}. We then focus on social networks scenario, and introduce a natural budget-based influence propagation model that we name {\em Budgeted Triggering}. This model extends many of the previously studied influence models in networks. Most notably, it extends the well-known Triggering model~\cite{KempeKT03}, which in itself generalizes several models such as the Independent Cascade and the Linear Threshold models. We analyze this model within the two-stage framework mentioned above, and demonstrate that its underlying influence function is monotone and submodular. Consequently, we can approximate this model to within a factor of $1-1/e$. We also consider a multi-player game that extends our model. In this game, there are multiple players, each of which is interested to spend her budget in a way that maximizes her own network influence. We establish that the price of anarchy (PoA) of this game is equal to $2$. This result is derived by extending the definition of a monotone utility game on the integer lattice~\cite{MaeharaYK15}. Specifically, we show that one of the conditions of the utility game can be relaxed, while still maintaining a PoA of at most $2$, and that the refined definition captures our budgeted influence model.

\smallskip \noindent {\bf An online setting.} In the online setting, there is unknown subset of individuals that arrive in a random order. Whenever an individual arrives, the algorithm learns the marginal influence for each possible budget assignment, and needs to make an irrevocable decision regarding the allocation to that individual. This allocation cannot be altered later on. Intuitively, this setting captures the case in which there is an unknown influence function that is partially revealed with each arriving individual. Similarly to before, we focus on the case that the influence function is monotone and submodular. Note that this setting is highly-motivated in practice. As observed by \name{Seeman and Singer~}\cite{SeemanS13}, in many cases of interest, online merchants can only apply marketing techniques on individuals who have engaged with them in some way, e.g., visited their online store. This gives rise to a setting in which only a small unknown sample of individuals from a social network arrive in an online fashion. We identify that this setting generalizes the \textit{submodular knapsack secretary} problem~\cite{BateniHZ13}, which in turn, extends the well-known \textit{secretary} problem~\cite{Dynkin63}. We develop a $1 /(15e) \approx 0.025$-competitive algorithm for the problem. Importantly, our results not only apply to a more general setting, but also improve the best known competitive bound for the former problem, which is $1 /(20e) \approx 0.018$, due to \name{Feldman, Naor and Schwartz~}\cite{FeldmanNS11}. 
\ifthenelse{\equal{\Context}{ABSTRACT}}
{Due to space limitations, some proofs are omitted from this extended abstract.} 
{} 

\subsection{Related work}
Models of influence spread in networks are well-studied in social science~\cite{Granovetter1978} and marketing literature~\cite{GoldenbergLM2001}. \name{Domingos and Richardson~}\cite{DomingosR01,RichardsonD02} were the first to pose the question of finding influential individuals who will maximize adoption
through a word-of-mouth effect in a social network. \name{Kempe, Kleinberg and Tardos~}\cite{KempeKT03,KempeKT05} formally modeled this question, and proved that several important models have submodular influence functions. Subsequent research have studied extended models and their characteristics~\cite{Kleinberg07,MosselR07,BharathiKS07,HartlineMS08,Chen09,Singer12,EftekharGK13,SeemanS13,KhannaL14,DemaineHMMRSZ14}, and developed techniques for inferring influence models from observable data~\cite{GoyalBL10,Gomez-RodriguezLK12,LeiMMCS15}. Influence maximization with multiple players has also been considered in the past~\cite{BharathiKS07,GoyalK12,HeK13}. Kempe et al.~\cite{KempeKT03}, and very recently, Yang et al.~\cite{YangMPH16}, studied propagation models that have a similar flavor to our budgeted setting. We like to emphasize that there are several important distinctions between their models and ours. Most importantly, their models assume a strong type of \textit{fractional} diminishing returns property that our \textit{integral} model does not need to satisfy. Therefore, their models cannot capture the scenarios we describe. The reader may refer to the cited papers and the references therein for a broader review of the literature. 

\name{Alon et al.~}\cite{AlonGT12} studied models of budget allocation in social networks. As already mentioned, our model follows some of the design decisions in their approach. For example, their models support constraints on the amount of budget that can be assigned to any individual. Such constraints are motivated by practical marketing conditions set by policy makers and regulations. Furthermore, their models focus on a discrete integral notion of a budget, which is consistent with common practices in many organizations (e.g., working in multiplications of some fixed value) and related simulations~\cite{ShihL2001}. We include those considerations in our model as well. Alon et al.\ proved that one of their models, namely, the budget allocation over bipartite influence model, admits an efficient $(1-1/e)$-approximation algorithm. This result was extended by \name{Soma et al.~}\cite{SomaKIK14} to the problem of maximizing a monotone submodular function over the integer lattice subject to a knapsack constraint. The algorithm for the above problems is a reminiscent of the algorithm for maximizing a monotone submodular set function subject to a knapsack constraint~\cite{Sviridenko04}. Note that none of those papers have taken into consideration the secondary propagation process that occurs in social networks. 

The classical secretary problem was introduced more than 50 years ago (e.g.,~\cite{Dynkin63}). Since its introduction, many variants and extension of that problem have been proposed and analyzed~\cite{Kleinberg05,BabaioffIKK07,BabaioffIKK08,BarmanUCM12}. The problem that is closest to the problem implied from our online model is the submodular knapsack secretary problem~\cite{BateniHZ13,GuptaRST10,FeldmanNS11}. An instance of this problem consists of a set of $n$ secretaries that arrive in a random order, each of which has some intrinsic cost. An additional ingredient of the input is a monotone submodular set function that quantifies the value gained from any subset of secretaries. The objective is to select a set of secretaries of maximum value under the constraint that their overall cost is no more than a given budget parameter. Note that our model extends this setting by having a more general influence function that is submodular over the integer lattice. Essentially, this adds another layer of complexity to the problem as we are not only required to decide which secretaries to select, but we also need to assign them budgets.


\section {Preliminaries}\label{sec:prelim}
We begin by introducing a very general budgeted influence propagation model. This model will be specialized later when we consider the offline and online settings. In our model, there is a set of $n$ agents and an influence function $f: \mathbb{N}^n \rightarrow \mathbb{R}_+$. Furthermore, there 
is a capacity vector $c \in \mathbb{N}_+^n$ and a budget $B \in \mathbb{N}_+$. Our objective is to compute a budget assignment to the agents $b \in \mathbb{N}^n$, which maximizes the influence $f(b)$. The vector $b$ must (1) respect the capacities, that is, $0 \leq b_i \leq c_i$, for every $i \in [n]$, (2) respect the total budget, namely, $\sum_{i=1}^n b_i \leq B$. In the following, we assume without loss of generality that each $c_i \leq B$.

We primarily focus on influence functions that maintain the properties of monotonicity and submodularity. A function $f: \mathbb{N}^n \rightarrow \mathbb{R}_+$ is called \textit{monotone} if $f(x) \leq f(y)$ whenever $x \leq y$ coordinate-wise, i.e., $x_i \le y_i$, for every $i \in [n]$. The definition of submodularity for functions over the integer lattice is a natural extension of the classical definition of submodularity over sets (or boolean vectors):

\begin{definition}
A function $f: \mathbb{N}^n \rightarrow \mathbb{R}_+$ is said to be \emph{submodular over the integer lattice} if $f(x) + f(y) \geq f(x \vee y) + f(x \wedge y)$, for all integer vectors $x$ and $y$, where $x \vee y$ and $x \wedge y$ denote the coordinate-wise maxima and minima, respectively. Specifically, $(x \vee y)_i = \max\{x_i,y_i\}$ and $(x \wedge y)_i = \min\{x_i,y_i\}$.
\end{definition}

In the remainder of the paper, we abuse the term submodular to describe both set functions and functions over the integer lattice. We also make the standard assumption of a \textit{value oracle} access for the function $f$. A value oracle for $f$ allows us to query about $f(x)$, for any vector $x$. The question of how to compute the function $f$ in an efficient (and approximate) way has spawned a large body of work in the context of social networks (e.g.,~\cite{KempeKT03,ChenWY09,ChenWW10,MathioudakisBCGU11,CohenDPW14,BorgsBCL14}). 

Notice that for the classical case of sets, the submodularity condition implies that $f(S) + f(T) \ge f(S\cup T) + f(S\cap T)$, for every $S, T \subseteq [n]$, and the monotonicity property implies that $f(S) \leq f(T)$ if $S \subseteq T$. An important distinction between the classical set setting and the integer lattice setting can be seen when we consider the \textit{diminishing marginal returns} property. This property is an equivalent definition of submodularity of set functions, stating that  $f(S \cup \{i\}) - f(S) \ge f(T \cup \{i\})-f(T) $, for every $S \subseteq T$ and every $i \notin T$. However, this property, or more accurately, its natural extension, does not characterize submodularity over the integer lattice, as observed by \name{Soma et al.~}\cite{SomaKIK14}. For example, there are simple examples of a submodular function $f$ for which 
$$
f(x + \chi_i) - f(x) \geq f(x + 2\chi_i) - f(x + \chi_i)
$$
does not hold. Here, $\chi_i$ is the characteristic vector of the set $\{i\}$, so that $x + k\chi_i$ corresponds to an update of $x$ by adding an integral budget $k$ to agent $i$. Note that a weaker variant of the diminishing marginal returns does hold for submodular functions over the integer lattice. 
\begin{lemma}[Lem~2.2 \cite{SomaKIK14}] \label{lm:submod_property_1}
Let $f$ be a monotone submodular function over the integer lattice. For any $i \in [n]$, $k \in \mathbb{N}$, and $x \leq y$, it follows that 
$$
f(x \vee k \chi_i) - f(x) \geq f(y \vee k\chi_i) - f(y) \ .
$$
\end{lemma}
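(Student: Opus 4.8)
The plan is to reduce the claimed inequality to a single application of the lattice submodularity axiom, combined with monotonicity. First I would rewrite the target inequality in the additive form
$$
f(x \vee k\chi_i) + f(y) \geq f(y \vee k\chi_i) + f(x) \ ,
$$
which is what we ultimately want to establish. The natural move is to feed submodularity the pair of vectors $a = x \vee k\chi_i$ and $b = y$, since the left-hand side above is exactly $f(a) + f(b)$.

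Next I would compute the join and meet of this pair. For the join, because $x \leq y$ we have $x \vee y = y$, so
$$
a \vee b = (x \vee k\chi_i) \vee y = (x \vee y) \vee k\chi_i = y \vee k\chi_i \ ,
$$
which is precisely the first term on the right-hand side of the target. Submodularity therefore gives $f(a)+f(b) \geq f(y \vee k\chi_i) + f\bigl((x \vee k\chi_i) \wedge y\bigr)$, and it remains only to control the meet term $f\bigl((x \vee k\chi_i) \wedge y\bigr)$ from below by $f(x)$.

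The key observation is that $(x \vee k\chi_i) \wedge y \geq x$ coordinate-wise. For every coordinate $j \neq i$ the vector $k\chi_i$ vanishes, so the $j$-th entry is $\min\{x_j, y_j\} = x_j$ by the hypothesis $x \leq y$; for coordinate $i$ the entry is $\min\{\max\{x_i,k\}, y_i\}$, which is at least $x_i$ since both $\max\{x_i,k\} \geq x_i$ and $y_i \geq x_i$. Monotonicity then yields $f\bigl((x \vee k\chi_i) \wedge y\bigr) \geq f(x)$, and chaining the two inequalities gives $f(x \vee k\chi_i) + f(y) \geq f(y \vee k\chi_i) + f(x)$, which rearranges to the claim.

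I do not expect any genuine obstacle here: the entire argument is a well-chosen instantiation of the two defining properties. The only point requiring a moment of care is the coordinate-wise verification that the meet dominates $x$, and in particular handling coordinate $i$ where the nested $\min$/$\max$ appears; everything else is mechanical. The real ``idea'' of the proof is simply the correct choice of the pair $(x \vee k\chi_i,\, y)$ to which submodularity is applied, so that the join collapses to the desired term while the meet stays above $x$.
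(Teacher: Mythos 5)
Your proof is correct. The paper itself does not prove this lemma---it imports it verbatim from Soma et al.\ \cite{SomaKIK14} as a known result---so there is no in-paper argument to compare against; your derivation (applying the submodularity inequality to the pair $x \vee k\chi_i$ and $y$, noting the join collapses to $y \vee k\chi_i$ since $x \leq y$, and bounding the meet term below by $f(x)$ via monotonicity) is exactly the standard proof of this fact, and every step, including the coordinate-wise check that $(x \vee k\chi_i) \wedge y \geq x$, is sound.
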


\section{An Offline Model}	\label{sec:offline}
In this section, we study the offline version of the budgeted influence propagation model. As already noted, we consider budgeted influence propagation to be a two-stage process consisting of (1) direct influence related to the budget assignment, followed by (2) influence related to a propagation process in the network. In particular, in the first stage, the amount of budget allocated to an individual determines her level of effort and success in influencing her direct friends. This natural assumption is consistent with previous work~\cite{AlonGT12}. Then, in the second stage, a word-of-mouth propagation process takes place in which additional individuals in the network may become affected. Note that the allocated budgets do not play role at this stage. 

We identify a large family of budget-based propagation functions that admit an efficient solution. Specifically, we first identify sufficient properties of the influence functions of both stages, which give rise to a resulting (combined) influence function that is monotone and submodular. Consequently, our model can be interpreted as an instance of a monotone submodular function maximization over the integer lattice subject to a knapsack constraint. This problem is known to have an efficient $(1-1/e)$-approximation~\cite{SomaKIK14}, which is best possible under the assumption that P $\neq$ NP. This NP-hardness bound of $1-1/e$ already holds for the special case of maximum coverage~\cite{Feige98,AlonGT12}.

We subsequently develop a natural model of budgeted influence propagation in social networks that we name \textit{Budgeted Triggering}. This model generalizes many settings, including the well-known Triggering model. Note that the Triggering model already extends several models used to capture the spread of influence in networks, like the Independent Cascade, Linear Threshold, and Listen Once models~\cite{KempeKT03}. We demonstrate that the influence function defined by this model is monotone and submodular, and thus, admits an efficient $(1-1/e)$-approximation. Technically, we achieve this result by demonstrating that the two-stage influence functions that underlie this model satisfy the sufficient properties mentioned above. 

Finally, we study an extension of the Budgeted Triggering model to a multi-player game. In this game, there are multiple self-interested players (e.g., advertisers), each of which is interested to spend her budget in a way that maximizes her own network influence. We establish that the price of anarchy (PoA) of the game is exactly $2$. In fact, we prove that this result holds for a much more general type of games. Maehara et al.~\cite{MaeharaYK15} recently defined the notion of a monotone utility game on the integer lattice, and demonstrated that its PoA is at most $2$. Their utility game definition does not capture our multi-player game. We show that one of the conditions in their game definition can be relaxed while still maintaining the same PoA. Crucially, this relaxed definition captures our model.

\subsection{A two-stage influence composition}
The two-stage process can be formally interpreted as a composition of two influence functions, $f = h \circ g$. The first function 
$g:\mathbb{N}^n \rightarrow \{0,1\}^n$ captures the set of influenced agents for a given budget allocation, while the second function $h: \{0,1\}^n \rightarrow \mathbb{R}_+$ captures the overall number (or value) of influenced agents, given some seed agent set for a propagation process. In particular, the influenced agents of the first stage are the seed set for the second stage. We next describe sufficient conditions for the functions $g$ and $h$ which guarantee that their composition is monotone and submodular over the integer lattice. Note that we henceforth use notation related to sets and their binary vector representation interchangeably.
\begin{definition}
A function $g: \mathbb{N}^n \rightarrow \{0,1\}^n$ is said to be \emph{coordinate independent} if it satisfies $g(x \vee y) \leq g(x) \vee g(y)$, for any $x,y \in \mathbb{N}^n$.
\end{definition}

\begin{definition}
A function $g: \mathbb{N}^n \rightarrow \{0,1\}^n$ is said to be \emph{monotone} if $g(x) \leq g(y)$ coordinate-wise whenever $x\leq y$ coordinate-wise.
\end{definition}

Many natural influence functions are coordinate independent. One such example is the family of functions in which the output vector is a coordinate-wise disjunction over a set of $n$ vectors, each of which captures the independent influence implied by some agent. Specifically, the $i$th vector in the disjunction is the result of some function $f_i: \mathbb{N} \rightarrow \{0,1\}^n$ indicating the affected agents as a result of any budget allocation assigned only to agent $i$. We are now ready to prove our composition lemma. 

\begin{lemma} \label{lem:compose}
Given a monotone coordinate independent function $g: \mathbb{N}^n \rightarrow \{0,1\}^n$ and a monotone submodular function $h: \{0,1\}^n \rightarrow \mathbb{R}_+$, the composition $f = h \circ g: \mathbb{N}^n \rightarrow \mathbb{R}_+$ is a monotone submodular function over the integer lattice.	
\end{lemma}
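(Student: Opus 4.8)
The plan is to verify the two properties separately, and both reduce to chaining the given hypotheses together. For monotonicity, suppose $x \leq y$ coordinate-wise. Since $g$ is monotone, $g(x) \leq g(y)$, and since $h$ is monotone as a set function, applying it preserves the order, so $f(x) = h(g(x)) \leq h(g(y)) = f(y)$. This part is immediate and I would dispose of it in one line.

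The substance is submodularity, i.e. $f(x) + f(y) \geq f(x \vee y) + f(x \wedge y)$ for all $x, y \in \mathbb{N}^n$. My approach is to introduce the sets $S := g(x)$ and $T := g(y)$ and to bound the two outputs $g(x \vee y)$ and $g(x \wedge y)$ from above by $S \cup T$ and $S \cap T$ respectively. The join bound $g(x \vee y) \leq g(x) \vee g(y) = S \cup T$ is exactly the coordinate independence hypothesis. The meet bound comes from monotonicity of $g$: since $x \wedge y \leq x$ and $x \wedge y \leq y$ coordinate-wise, we get $g(x \wedge y) \leq g(x) = S$ and $g(x \wedge y) \leq g(y) = T$, hence $g(x \wedge y) \leq S \cap T$.

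With these two containments in hand, I would invoke monotonicity of $h$ to push them through to the values, giving $h(g(x \vee y)) \leq h(S \cup T)$ and $h(g(x \wedge y)) \leq h(S \cap T)$. Summing and then applying the ordinary set-submodularity of $h$ to the pair $S, T$ yields the chain
$$
f(x \vee y) + f(x \wedge y) \;\leq\; h(S \cup T) + h(S \cap T) \;\leq\; h(S) + h(T) \;=\; f(x) + f(y),
$$
which is precisely the claim.

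The proof is short, so there is no single hard calculation; the one place that requires care, and which I expect to be the conceptual crux, is recognizing that the two hypotheses on $g$ play asymmetric roles. Coordinate independence is needed only to control the join $g(x \vee y)$, whereas the meet $g(x \wedge y)$ is controlled purely by monotonicity. The reason upper bounds on $g$ suffice (rather than the equalities one might expect for a lattice homomorphism) is that $h$ is monotone, so any overshoot in $g$ only increases the right-hand side we are bounding, keeping the inequality in the favorable direction. Assembling the bounds in the correct order so that monotonicity of $h$ is applied before its submodularity is the only thing to get right.
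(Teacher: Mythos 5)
Your proof is correct and follows essentially the same route as the paper's: both use coordinate independence to bound $g(x \vee y)$ by $g(x) \vee g(y)$, monotonicity of $g$ to bound $g(x \wedge y)$ by $g(x) \wedge g(y)$, and then push these through the monotonicity and set-submodularity of $h$. The only difference is cosmetic---you write the chain of inequalities starting from $f(x \vee y) + f(x \wedge y)$ while the paper starts from $f(x) + f(y)$.
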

\ifthenelse{\equal{\Context}{ABSTRACT}}
{} 
{
\begin{proof}
The coordinate independence properties of $g$ and the monotonicity of $h$ imply that 
$$
h(g(x) \vee g(y))  \geq  h(g(x\vee y)).
$$
In addition, from the monotonicity of $g$ we know that $g(x \wedge y) \leq g(x)$ and $g(x \wedge y) \leq g(y)$. Thus, together with the monotonicity of $h$, we get that
$$
h(g(x) \wedge g(y)) \geq  h(g(x \wedge y)). 
$$
Utilizing the above results, we attain that $f$ is submodular since 
\begin{eqnarray*}
	f(x) + f(y)& = & h(g(x)) + h(g(y))\\
	& \geq & h(g(x) \vee g(y)) + h(g(x) \wedge g(y))\\ 
	&\geq& h(g(x \vee y)) + h(g(x \wedge y))\\
	& = & f(x \vee y) + f (x \wedge y) \ , 
\end{eqnarray*}
where the first inequality is by the submodularity of $h$. 

We complete the proof by noting that $f$ is monotone since both $g$ and $h$ are monotone. Formally, given $x \leq y$ then it follows that $f(x) = h(g(x)) \leq h(g(y)) = f(y)$ by $h$'s monotonicity and since $g(x) \leq g(y)$ by $g$'s monotonicity.~
\end{proof}
} 

As a corollary of the lemma, we get the following theorem. 
\begin{theorem} \label{th:compose}
Given a monotone coordinate independent function $g: \mathbb{N}^n \rightarrow \{0,1\}^n$ and a monotone submodular function $h: \{0,1\}^n \rightarrow \mathbb{R}_+$, there is a $(1-1/e)$-approximation algorithm for maximizing the influence function $f = h \circ g: \mathbb{N}^n \rightarrow \mathbb{R}_+$ under capacity constraints $c \in \mathbb{N}_+^n$ and a budget constraint $B \in \mathbb{N}_+$, whose running time is polynomial in $n$, $B$, and the query time of the value oracle for $f$.
\end{theorem}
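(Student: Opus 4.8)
The plan is to obtain the theorem as an immediate consequence of Lemma~\ref{lem:compose} combined with the known algorithm of Soma et al.~\cite{SomaKIK14} for maximizing a monotone submodular function over the integer lattice subject to a knapsack constraint. First I would invoke Lemma~\ref{lem:compose} on the given functions $g$ and $h$ to conclude that the combined objective $f = h \circ g: \mathbb{N}^n \rightarrow \mathbb{R}_+$ is monotone and submodular over the integer lattice. This immediately reduces the task to arguing that maximizing such an $f$ under our constraints is precisely an instance of the problem solved in~\cite{SomaKIK14}.

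The second step is to line up the constraints. The canonical problem of~\cite{SomaKIK14} asks to maximize a monotone submodular lattice function over a box $0 \leq b \leq c$ subject to a knapsack constraint $\langle w, b \rangle \leq B$. Our capacity constraints $0 \leq b_i \leq c_i$ for every $i \in [n]$ are exactly the box constraint, and the total-budget constraint $\sum_{i=1}^n b_i \leq B$ is a knapsack constraint with unit weights $w_i = 1$. Hence our optimization problem coincides with their setting, and their algorithm returns a budget assignment $b$ whose value $f(b)$ is at least a $(1-1/e)$ fraction of the optimum. Since we assumed in Section~\ref{sec:prelim} that $c_i \leq B$ for every $i$, every single-coordinate allocation is feasible and no degeneracy arises.

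The third step is the running-time bookkeeping. Each evaluation of $f$ costs one query to the value oracle for $f$ (equivalently, one evaluation of $g$ followed by one of $h$), and the algorithm of~\cite{SomaKIK14} performs a number of such evaluations together with auxiliary work that is polynomial in $n$ and $B$. Multiplying these bounds yields the claimed running time that is polynomial in $n$, $B$, and the oracle query time; in particular the pseudopolynomial (in $B$) variant of their algorithm already attains the full $1-1/e$ guarantee, so nothing stronger is needed.

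I do not anticipate a genuine obstacle, as the statement is phrased as a corollary of the composition lemma. The only points that warrant explicit care are (i) checking that the lattice submodularity established in Lemma~\ref{lem:compose} is precisely the notion of submodularity over the integer lattice assumed by~\cite{SomaKIK14}, and (ii) confirming that our simultaneous capacity and budget requirements are exactly the box-plus-knapsack formulation, via the identification of unit weights $w_i = 1$, rather than an extra independent constraint that would fall outside their framework. Both are routine once this correspondence is made explicit.
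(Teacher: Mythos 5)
Your proposal is correct and follows exactly the paper's own proof: invoke Lemma~\ref{lem:compose} to establish that $f$ is monotone and submodular over the integer lattice, then apply the $(1-1/e)$-approximation algorithm of Soma et al.~\cite{SomaKIK14} for maximizing such a function subject to a knapsack constraint. Your additional care in matching the constraints (box constraints plus a unit-weight knapsack) and the running-time bookkeeping only makes explicit what the paper leaves implicit.
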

\ifthenelse{\equal{\Context}{ABSTRACT}}
{} 
{
\begin{proof}
We know by Lemma~\ref{lem:compose} that $f$ is monotone and submodular. Consequently, we attain an instance of maximizing a monotone submodular function over the integer lattice subject to a knapsack constraint. \name{Soma et al.~}\cite{SomaKIK14} recently studied this problem, and developed a $(1-1/e)$-approximation algorithm whose running time is polynomial in $n$, $B$, and the query time for the value oracle of the submodular function.~
\end{proof}
} 

\subsection{The budgeted triggering model} 
We now focus on social networks, and introduce a natural budget-based influence model that we call the Budgeted Triggering model. This model consists of a social network, represented by a directed graph $G = (V,E)$ with $n$ nodes (agents) and a set $E$ of directed edges (relationships between agents). In addition, there is a function $f: \mathbb{N}^n \rightarrow \mathbb{R}_+$ that quantifies the influence of any budget allocation  $b \in \mathbb{N}^n$ to the agents. The concrete form of $f$ strongly depends on the structure of the network, as described later. The objective is to find a budget allocation $b$ that maximizes the number of influenced nodes, while respecting the feasibility constraints: (1) $b_i \le c_i$, for every node $i \in V$, and (2) $\sum_{i \in V} b_i \le B$.

For ease of presentation, we begin by describing the classic Triggering model~\cite{KempeKT03}. Let $N(v)$ be the set of neighbors of node $v$ in the graph. The influence function implied by a Triggering model is defined by the following simple process. Every node $v \in V$ independently chooses a random \textit{triggering set} $T^v \subseteq N(v)$ among its neighbors according to some fixed distribution. Then, for any given seed set of nodes, its influence value is defined as the result of a deterministic cascade process in the network which works in steps. In the first step, only the selected seed set is affected. At step $\ell$, each node $v$ that is still not influenced becomes influenced if any of its neighbors in $T^v$ became influenced at time $\ell-1$. This process terminates after at most $n$ rounds.

Our generalized model introduces the notion of budgets into this process. Specifically, the influence function in our case adheres to the following process. Every node $v$ independently chooses a random \textit{triggering vector} ${t}^v \in \mathbb{N}^{|N(v)|}$ according to some fixed distribution. Given a budget allocation $b \in \mathbb{N}^{n}$, the influence value of that allocation is the result of the following deterministic cascade process. In the first step, every node $v$ that was allocated a budget $b_v > 0$ becomes affected. At step $\ell$, every node $v$ that is still not influenced becomes influenced if any of its neighbors $u \in N(v)$ became influenced at time $\ell - 1$ and $b_u \geq t^v_u$. One can easily verify that the Triggering model is a special case of our Budgeted Triggering model, where the capacity vector $c = 1^n$, and each $t^v_u = 0$ if $u \in T^v$, and $t^v_u = B+1$, otherwise. 

Intuitively, the triggering vectors in our model capture the amount of effort that is required from each neighbor of some agent to affect her. Of course, the underlying assumption is that the effort of individuals correlates with the budget they receive. As an example, consider the case that a node $v$ selects a triggering value $t^v_u = 1$ for some neighbor $u$. In this case, $u$ can only influence $v$ if it receives a budget of at least $1$. However, if $v$ selects a value $t^v_u = 0$ then it is enough that $u$ becomes affected in order to influence $v$. In particular, it is possible that $u$ does not get any budget but still influences $v$ after it becomes affected in the cascade process.

Given a budget allocation $b$, the value of the influence function $f(b)$ is the expected number of nodes influenced in the cascade process, where the expectation is taken over the random choices of the model. Formally, let $\sigma$ be some fixed choice of the triggering vectors of all nodes (according to the model distribution), and let $\mathrm{Pr}(\sigma)$ be the probability of this outcome. Let $f_{\sigma}(b)$ be the (deterministic) number of nodes influenced when the triggering vectors are defined by $\sigma$ and the budget allocation is $b$. Then, $f(b) = \sum_{\sigma} \mathrm{Pr}(\sigma) \cdot f_\sigma(b)$.

\begin{theorem} \label{thm:submod_proof}
There is a $(1-1/e)$-approximation algorithm for influence maximization under the Budgeted Triggering model whose running time is polynomial in $n$, $B$, and the query time of the value oracle for the influence function.
\end{theorem}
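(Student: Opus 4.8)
The plan is to express the Budgeted Triggering influence function in the two-stage composition form of Lemma~\ref{lem:compose} and then invoke Theorem~\ref{th:compose}. I would first fix a realization $\sigma$ of the triggering vectors and work with the deterministic function $f_\sigma$, splitting the edges of $G$ into \emph{budgeted} edges (those $u \to v$ with $t^v_u \ge 1$) and \emph{free} edges (those with $t^v_u = 0$). The first-stage function $g_\sigma : \mathbb{N}^n \to \{0,1\}^n$ returns the budget-direct set $S(b) \cup \{v : \exists u,\ b_u \ge t^v_u \ge 1\}$, where $S(b) = \{v : b_v > 0\}$ is the seed set, and the second-stage function $h_\sigma : \{0,1\}^n \to \mathbb{R}_+$ returns the number of nodes reachable from a given seed set using only free edges.

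The crucial structural observation is that an active budgeted edge always originates from a seed, since $b_u \ge t^v_u \ge 1$ forces $b_u > 0$. Using this, I would show $f_\sigma = h_\sigma \circ g_\sigma$, i.e.\ that the two stages reconstruct the full cascade: for any influenced $v$, take a witnessing path from a seed and consider its last budgeted edge $s_j \to s_{j+1}$; its source $s_j$ is a seed, so $s_{j+1} \in g_\sigma(b)$, and the remaining suffix uses only free edges, placing $v$ in the free-reachable closure of $g_\sigma(b)$. Conversely, $g_\sigma(b)$ is contained in the cascade and free edges are always active, so the free closure stays within the cascade.

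Next I would verify the hypotheses of Lemma~\ref{lem:compose}. The map $g_\sigma$ is monotone, since raising budgets only enlarges the seed set and activates more budgeted edges. It is coordinate independent: for $v \in g_\sigma(x \vee y)$, if $v$ is a seed then $(x \vee y)_v > 0$ forces $x_v > 0$ or $y_v > 0$; otherwise some $u$ satisfies $\max(x_u, y_u) \ge t^v_u \ge 1$, so $x_u \ge t^v_u$ or $y_u \ge t^v_u$, and the corresponding single argument already places $v$ in its image, giving $g_\sigma(x \vee y) \le g_\sigma(x) \vee g_\sigma(y)$. Finally, $h_\sigma$ is monotone submodular, being the reachability (coverage) function of the fixed directed graph on the free edges, which is exactly the classical Triggering-model influence for that subgraph and is known to be monotone submodular.

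To assemble the result, Lemma~\ref{lem:compose} gives that each $f_\sigma$ is monotone submodular over the integer lattice, and the overall influence $f = \sum_\sigma \mathrm{Pr}(\sigma) \cdot f_\sigma$ is a nonnegative combination of such functions, hence itself monotone submodular over the lattice. Invoking Theorem~\ref{th:compose} (equivalently, the Soma et al.\ algorithm underlying it) then yields the $(1-1/e)$-approximation; since $f$ is accessed through its value oracle, the running time is polynomial in $n$, $B$, and the oracle query time. I expect the main obstacle to be the coordinate-independence and reconstruction step, which rests entirely on the observation that a positive-threshold edge can only fire from a positively-budgeted node; this is precisely what prevents an activation path in $g_\sigma(x \vee y)$ from splicing together contributions of $x$ and $y$, and it is also what lets the interleaved cascade be reorganized into a single budgeted hop followed by free propagation.
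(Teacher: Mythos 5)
Your proposal is correct and takes essentially the same approach as the paper: fix a realization $\sigma$, write $f_\sigma = h_\sigma \circ g_\sigma$ where $g_\sigma$ is monotone and coordinate independent and $h_\sigma$ is the classical Triggering model on the free ($t^v_u = 0$) edges, apply Lemma~\ref{lem:compose}, average over $\sigma$, and invoke the Soma et al.\ algorithm. The only (cosmetic) difference is that your $g_\sigma$ includes only neighbors reached over budgeted edges ($t^v_u \ge 1$) while the paper's $g_\sigma$ also includes free-edge neighbors of seeds; both variants are monotone, coordinate independent, and compose with the same $h_\sigma$ to reconstruct $f_\sigma$, resting on the identical key observation that a positive-threshold edge can only fire from a positively-budgeted node.
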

\ifthenelse{\equal{\Context}{ABSTRACT}}
{} 
{
\begin{proof}
Consider an influence function $f: \mathbb{N}^n \rightarrow \mathbb{R}_+$ resulting from the Budgeted Triggering model. We next show that the function $f$ is monotone and submodular over the integer lattice. As a result, our model can be interpreted as an instance of maximizing a monotone submodular function over the integer lattice subject to a knapsack constraint, which admits an efficient $(1-1/e)$-approximation. Notice that it is sufficient to prove that each (deterministic) function $f_\sigma$ is monotone submodular function over the integer lattice. This follows as $f$ is a non-negative linear combination of all $f_\sigma$. One can easily validate that submodularity and monotonicity are closed under non-negative linear combinations. 

Consider some function $f_\sigma: \mathbb{N}^n \rightarrow \mathbb{R}_+$. For the purpose of establishing that $f_\sigma$ is monotone and submodular, we show that $f_\sigma$ can be interpreted as a combination of a monotone coordinate independent function $g_\sigma: \mathbb{N}^n \rightarrow \{0,1\}^n$, and a monotone submodular function $h_\sigma: \{0,1\}^n \rightarrow \mathbb{R}_+$. The theorem then follows by utilizing Lemma~\ref{lem:compose}. We divide the diffusion process into two stages. In the first stage, we consider the function $g_\sigma$, which given a budget allocation returns (the characteristic vector of) the set $S$ of all the nodes that were allocated a positive budget along with their immediate neighbors that were influenced according to the Budgeted Triggering model. Formally,
$$
g_{\sigma}(b) = S \triangleq \big\{v : b_v > 0 \big\} \cup \big\{u : \exists v \in N(u), b_v > 0, b_v \geq t^u_v \big\} \ .
$$
In the second stage, we consider the function $h_{\sigma}$ that receives (the characteristic vector of) $S$ as its seed set, and makes the (original) Triggering model interpretation of the vectors. Specifically, the triggering set of each node $v$ is considered to be $T^v = \{u : t^v_u = 0\}$. Intuitively, the function $g_\sigma$ captures the initial budget allocation step and the first step of the propagation process, while the function $h_\sigma$ captures all the remaining steps of the propagation. Observe that $f_\sigma = h_\sigma \circ g_\sigma$ by our construction. Also notice that $h_{\sigma}$ is trivially monotone and submodular as it is the result of a Triggering model~\cite[Thm.~4.2]{KempeKT03}. Therefore, we are left to analyze the function $g_\sigma$, and prove that it is monotone and coordinate independent. The next claim establishes these properties, and completes the proof of the theorem.
\end{proof}

\begin{claim} \label{clm:submod_sigma}
The function $g_\sigma$ is monotone and coordinate independent.
\end{claim}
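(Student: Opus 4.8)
The plan is to verify both properties directly from the explicit formula for $g_\sigma(b)$, exploiting the crucial fact that membership of any node in the set $S$ is governed by a single coordinate of $b$ at a time. Writing $S = A(b) \cup C(b)$, where $A(b) = \{v : b_v > 0\}$ records the directly funded nodes and $C(b) = \{u : \exists v \in N(u),\, b_v > 0,\, b_v \geq t^u_v\}$ records their immediate neighbors affected in the first propagation step, I would observe that $u \in A(b)$ depends only on $b_u$, whereas $u \in C(b)$ is a disjunction over neighbors $v$ of conditions each depending only on the single coordinate $b_v$ (together with the fixed threshold $t^u_v$). No membership condition ever couples two or more coordinates, and this single-coordinate structure is precisely what drives both claims.

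For monotonicity, I would fix $x \leq y$ coordinate-wise and argue that every $u \in g_\sigma(x)$ lies in $g_\sigma(y)$. If $u \in A(x)$, then $y_u \geq x_u > 0$ gives $u \in A(y)$. If instead $u \in C(x)$ via some neighbor $v$ with $x_v > 0$ and $x_v \geq t^u_v$, then $y_v \geq x_v$ yields both $y_v > 0$ and $y_v \geq t^u_v$, so the same neighbor $v$ witnesses $u \in C(y)$. Hence $g_\sigma(x) \leq g_\sigma(y)$.

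For coordinate independence, I would set $z = x \vee y$ and take any $u \in g_\sigma(z)$, aiming to place $u$ in $g_\sigma(x)$ or $g_\sigma(y)$. If $u \in A(z)$, then $\max\{x_u,y_u\} = z_u > 0$ forces $x_u > 0$ or $y_u > 0$, so $u \in A(x)$ or $u \in A(y)$. If $u \in C(z)$ via a neighbor $v$ with $z_v > 0$ and $z_v \geq t^u_v$, then since $z_v = \max\{x_v,y_v\}$ equals either $x_v$ or $y_v$, that same vector---say $x$ in the case $z_v = x_v$---satisfies $x_v > 0$ and $x_v \geq t^u_v$, placing $u$ in $C(x)$; the case $z_v = y_v$ is symmetric. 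In every case $u \in g_\sigma(x) \vee g_\sigma(y)$, which is exactly the desired $g_\sigma(x \vee y) \leq g_\sigma(x) \vee g_\sigma(y)$.

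I do not expect a genuine obstacle here: the content of the claim lies entirely in the single-coordinate nature of the first-step influence rule. The one point deserving care is the coordinate-independence step, where one must use that the coordinate-wise maximum $z_v$ is literally attained by one of $x_v$ or $y_v$, so that a single witness vector inherits the triggering condition. This would fail if the triggering condition depended simultaneously on two distinct coordinates, which is exactly why $g_\sigma$ is restricted to capture only the initial allocation and one propagation step, deferring the remaining (genuinely multi-seed) cascade to $h_\sigma$.
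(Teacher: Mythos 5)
Your proof is correct and follows essentially the same argument as the paper's: both proceed node-by-node through the two membership cases (positive budget versus influence by a neighbor) and exploit the fact that the coordinate-wise maximum $(x \vee y)_v$ is literally attained by $x_v$ or $y_v$, so a single witness vector inherits the triggering condition. Your explicit $A(b) \cup C(b)$ decomposition is only a notational repackaging of the paper's case split, not a different route.
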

\begin{proof}
Let $x,y \in \mathbb{N}^n$, and denote $w = x \vee y$. We establish coordinate independence by considering each influenced node in $g_\sigma(w)$ separately. 
Recall that $g_\sigma(w)$ consist of the union of two sets $\{v : w_v>0\}$ and $\{u : \exists v \in N(u), w_v >0,  w_v \geq t^u_v \}$.
Consider a node $v$ for which $w_v > 0$. Since $w_v = \max\{x_v, y_v\}$, we  know that at least one of $\{x_v, y_v\}$ is equal to $w_v$, say $w_v=x_v$. Hence, $v \in g_\sigma(x)$. Now, consider a node $u \in g_\sigma(w)$ having $w_u=0$. It must be the case that $u$ is influenced by one of its neighbors $v$. Clearly, $w_v > 0$ and $w_v \ge t^u_v$. Again, we can assume without loss of generality that $w_v = x_v$, and get that $u \in g_{\sigma}(x)$. This implies that for each $v \in g_\sigma(x \vee y)$, either $v \in g_\sigma(x)$ or $v \in g_\sigma(y)$, proving coordinate independence, i.e., $g_\sigma({x} \vee {y}) \leq g_\sigma({x}) \vee  g_\sigma({y})$.

We prove monotonicity in a similar way. Let $x \leq y$. Consider a node $v \in g_\sigma(x)$ for which $x_v > 0$. Since $y_v \geq x_v > 0$, we know that $v\in g_\sigma(y)$. Now, consider a node $u \in g_\sigma(x)$ having $x_u = 0$. There must be a node $v \in N(u)$ such that $x_v >0$, and $x_v \ge t^u_v$. Accordingly, we get that $y_v \ge t^u_v$, and hence, $u \in g_\sigma(y)$. This implies that $g_\sigma(x) \leq g_\sigma(y)$, which completes the proof.~
\end{proof}
} 

\subsection{A multi-player budgeted influence game}
We now focus on a multi-player budgeted influence game. In the general setting of the game, which is formally defined by the tuple $(M, (A^i)^M_{i=1}, (f^i)^M_{i=1})$, there are $M$ self-interested players, each of which needs to decide how to allocate its budget $B^i \in \mathbb{N}_+$ among $n$ agents. Each player has a capacity vector $c^i \in \mathbb{N}_{+}^n$ that bounds the amount of budget she may allocate to every agent. The budget assignment of player $i$ is denoted by $b^i \in \mathbb{N}_{+}^n$, and is referred to as its strategy. The strategy of player $i$ is feasible if it respects the constraints: (1) $b^i_j \leq c^i_j$, for every $j \in [n]$, and (2) $\sum_{j=1}^n b^i_j \leq B^i$. Let $A^i$ be the set of all feasible strategies for player $i$. Note that we allow mixed (randomized) strategies. Each player has an influence function $f^i: \mathbb{N}^{M \times n} \rightarrow \mathbb{R}_+$ that designates her own influence (payoff) in the game. Specifically, $f^i(b)$ is the payoff of player $i$ for the budget allocation $b$ of all players. This can also be written as $f^i(b^i, b^{-i})$, where the strategy of $i$ is $b^i$ and the strategies of all the other players are marked as $b^{-i}$. Note that the goal of each player is to maximize her own influence, given her feasibility constraints and the strategies of other players. Let $F(b) = \sum_{i=1}^M f^i(b)$ be the social utility of all players in the game. 

One of the most commonly used notions in game theory is \textit{Nash equilibrium} (NE)~\cite{Nash50}. This notion translates to our game as follows: A budget allocation $b$ is said to be in a NE if $f^i(b^i, b^{-i}) \geq f^i(\tilde{b}^i, b^{-i})$, for every $i$ and $\tilde{b}^i \in A^i$.

\smallskip \noindent {\bf Monotone utility game on the integer lattice.}
We begin by studying a monotone utility game on the integer lattice, and establish that its PoA is no more than $2$. Later on, we demonstrate that our Budgeted Triggering model can be captured by this game. Utility games were defined for submodular set functions by Vetta~\cite{Vetta02}, and later extended to submodular functions on the integer lattice by Maehara et al.~\cite{MaeharaYK15}. We build on the latter work, and demonstrate that one of the conditions in their utility game definition, namely, the requirement that the submodular function satisfies component-wise concavity, can be neglected. Note that component-wise concavity corresponds to the diminishing marginal returns property, which does not characterize submodularity over the integer lattice, as noted in Section~\ref{sec:prelim}. Therefore, removing this constraint is essential for proving results for our model.
 
We refine the definition of a monotone utility game on the integer lattice~\cite{MaeharaYK15}, so it only satisfies the following conditions:
\begin{list}{\hspace{0.6cm}(U\theenumi)}{\usecounter{enumi}}
\itemsep0em
\item $F(b)$ is a monotone submodular function on the integer lattice.
\item $F(b) \geq \sum_{i = 1}^M f^i(b)$.
\item $f^i(b) \geq F(b^i,b^{-i}) - F(0,b^{-i})$, for every $i \in [M]$.
\end{list}

\begin{theorem} \label{th:utilitygame}
The price of anarchy of the monotone utility game designated by U1-U3 is at most 2.
\end{theorem}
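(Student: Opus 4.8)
The plan is to follow the template of Vetta's utility-game argument, transported to the integer lattice, but with the crucial observation that only a $\vee$-increment form of diminishing returns is needed — a property that follows from U1 alone (submodularity plus monotonicity), with no appeal to component-wise concavity. Fix a possibly mixed Nash equilibrium $b$ and let $o$ be a social optimum, which we may take to be a pure profile since mixing cannot increase $\mathbb{E}[F]$ beyond the best pure value. Write $(x^i,b^{-i})$ for the profile in which player $i$ plays $x^i$ and everyone else plays their $b$-strategy, and $0^i$ for the zero strategy of player $i$. The target inequality is $F(o)\le 2\,\mathbb{E}[F(b)]$.

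First I would isolate a purely combinatorial inequality that holds pointwise for every pure profile $b$ and the fixed optimum $o$, making no reference to the equilibrium, namely
$$\sum_{i=1}^M \big(F(o^i,b^{-i}) - F(0^i,b^{-i})\big) \ \ge\ F(o \vee b) - F(b) \ \ge\ F(o) - F(b),$$
where the last step is monotonicity. To prove the first inequality I would introduce the hybrid profiles $b^{(i)} = (o^1\vee b^1,\dots,o^i\vee b^i,\,b^{i+1},\dots,b^M)$ with $b^{(0)}=b$ and $b^{(M)}=o\vee b$, so that $F(o\vee b)-F(b)$ telescopes as $\sum_i \big(F(b^{(i)})-F(b^{(i-1)})\big)$. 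For each index I would bound the marginal of inserting $o^i$ on the larger base $b^{(i-1)}$ by the marginal on the smaller base $b$, and then by $F(o^i,b^{-i})-F(0^i,b^{-i})$; the second of these uses submodularity on the pair $(o^i,b^{-i})$ and $b$ together with monotonicity to pass from $o^i\wedge b^i$ down to $0^i$.

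The technical heart is the lattice diminishing-returns step: for $x \le y$ and any increment $z$, monotone submodularity gives $F(x\vee z)-F(x)\ge F(y\vee z)-F(y)$. I would prove this in one line by applying the submodular inequality to $x\vee z$ and $y$, noting that $(x\vee z)\vee y = y\vee z$ because $x\le y$, and that $(x\vee z)\wedge y \ge x$ so that monotonicity yields $F((x\vee z)\wedge y)\ge F(x)$. This is precisely the location where Maehara et al.\ invoked component-wise concavity; the whole point of the relaxation in U1--U3 is that this $\vee$-increment version requires no concavity at all, which is essential because concavity (the additive diminishing-returns property) genuinely fails for lattice-submodular functions, as noted in Section~\ref{sec:prelim}.

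Finally I would bring in the equilibrium, and here the ordering matters. Combining the pointwise inequality above with U3 applied at each profile $(o^i,b^{-i})$ gives, still pointwise in $b$, the bound $\sum_i f^i(o^i,b^{-i}) \ge F(o)-F(b)$. I then take expectations over the product distribution of the equilibrium and apply the Nash condition $\mathbb{E}[f^i(b)] \ge \mathbb{E}[f^i(o^i,b^{-i})]$ — valid since the pure deviation $o^i\in A^i$ is feasible — to obtain $\sum_i \mathbb{E}[f^i(b)] \ge F(o)-\mathbb{E}[F(b)]$; condition U2 bounds the left-hand side by $\mathbb{E}[F(b)]$, and rearranging yields $2\,\mathbb{E}[F(b)] \ge F(o)$. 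I expect the main obstacle to be exactly this bookkeeping: keeping submodularity and monotonicity confined to the pointwise stage while the equilibrium inequality is invoked only after taking expectations, so that the bound covers mixed equilibria and the $\vee$-increment lemma — rather than concavity — carries the argument.
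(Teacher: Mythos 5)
Your proof is correct and follows essentially the same route as the paper's: your hybrid profiles $b^{(i)}$ are exactly the paper's $\tilde{b}^{i} \vee b$, and your two bounding steps (the $\vee$-increment diminishing-returns lemma, and the submodularity-plus-monotonicity pass from $o^i \wedge b^i$ down to $0$) are precisely the paper's two displayed inequalities, followed by the same U3/Nash/U2 chain. The only difference is your explicit pointwise-then-expectation bookkeeping for mixed equilibria, which the paper glosses over by writing the argument for a pure profile.
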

\begin{proof}
Let $b_* = (b_*^{1}, \ldots, b_*^{M})$ be the social optimal budget allocation, and let $b = (b^1, \ldots, b^M)$ be a budget allocation in Nash equilibrium. Let $\tilde{b}^{i} = (b_*^{1}, \ldots, b_*^{i}, 0, \ldots, 0)$ be the optimal budget allocation restricted to the first $i$ players. Notice that
\begin{eqnarray*}
F(b^*) - F(b) & \leq & F(b^* \vee b) - F(b) \\
& = & \sum_{i =1}^M F(\tilde{b}^{i} \vee b) - F(\tilde{b}^{i-1} \vee b) \\
& \leq & \sum_{i =1}^M F(b_*^{i} \vee b^i, b^{-i}) - F(b^i, b^{-i}) \ ,
\end{eqnarray*}
where the first inequality is due to the monotonicity of $F$, the equality holds by a telescoping sum, and the last inequality is due to submodularity of $F$. Specifically, submodularity implies that inequality since
$$
F(b_*^{i} \vee b^i, b^{-i}) + F(\tilde{b}^{i-1} \vee b) \geq F(\tilde{b}^{i} \vee b) +  F(b^i, b^{-i}) \ .
$$
Now, observe that
$$
F(b^i, b^{-i}) + F(b_*^{i}, b^{-i}) \geq F(b_*^{i} \vee b^i, b^{-i}) + F(b_*^{i} \wedge b^i, b^{-i}) \geq F(b_*^{i} \vee b^i, b^{-i}) + F(0, b^{-i}) \ .
$$
Here, the first inequality holds by the submodularity of $F$, while the last inequality follows from the monotonicity of $F$. Consequently, we derive that 
$$F(b^*) - F(b) \leq \sum_{i = 1}^M F(b_*^{i}, b^{-i}) - F(0, b^{-i})
\leq \sum_{i = 1}^M  f^i(b_*^{i}, b^{-i}) \leq \sum_{i = 1}^M  f^i(b^i, b^{-i}) \leq F(b) \ ,
$$
where the second inequality is by condition (U3) of the utility game, the third inequality holds since $b$ is a Nash equilibrium, and the last inequality is by condition (U2) of the utility game. This completes the proof as $F(b^*) \leq 2F(b)$.~
\end{proof}

\smallskip \noindent {\bf A multi-player Budgeted Triggering model.}
We extend the Budgeted Triggering model to a multi-player setting. As before, we have a social network, represented by a directed graph $G = (V,E)$, such that every node $v$ has an independent random triggering vector ${t}^v \in \mathbb{N}^{|N(v)|}$. Each player $i$ has a budget $B^i \in \bbN_+$, and a function $f^i: \mathbb{N}^{M \times n} \rightarrow \mathbb{R}_+$ that quantifies her influence, given the budget allocation of all players. The objective of each player $i$ is to find a budget allocation $b^i$, given the budget allocations of other players, that maximizes the number of nodes that she influences, while respecting the feasibility constraints: (1) $b^i_j \le c^i_j$, for every node $j \in V$, and (2) $\sum_{j \in V} b^i_j \le B$.

The process in which nodes become affected is very similar to that in Budgeted Triggering, but needs some refinement for the multi-player setting. We follow most design decisions of Bharathi et al.~\cite{BharathiKS07}. Specifically, whenever a player influences a node, this node is assigned the color of that player. Once a node become influenced, its color cannot change anymore. If two or more players provide positive budgets to the same node, then the node is given the color of the player that provided the highest budget. In case there are several such players, the node is assigned a color uniformly at random among the set of players with the highest budget assignment. If a node $u$ becomes influenced at step $\ell$, it attempts to influence each of its neighbors. If the activation attempt from $u$ to its neighbor $v$ succeeds, which is based on the triggering vector of $v$, then $v$ becomes influenced with the same color as $u$ at step $\ell + T_{uv}$, assuming that it has not been influenced yet. All $T_{uv}$'s are independent positive continuous random variables. This essentially prevents simultaneous activation attempts by multiple neighbors.

\begin{lemma}\label{lem:social_submod}
The social function $F(b) = \sum_{i = 1}^M f^i(b)$ is a monotone submodular function on the integer lattice.
\end{lemma}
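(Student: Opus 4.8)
The plan is to reduce the multi-player claim to the single-player analysis of Claim~\ref{clm:submod_sigma} by aggregating the players' budgets through a coordinate-wise maximum. The first observation is that $F(b)$ counts the total number of influenced nodes, irrespective of their colors: since every influenced node is credited to exactly one player, $\sum_{i=1}^M f^i(b)$ equals the size of the influenced set. Consequently, the coloring rule, the random tie-breaks, and the continuous delays $T_{uv}$ are irrelevant for $F$ --- they only determine which player is credited for a node, not whether the node is eventually influenced. Hence $F$ depends only on the realization $\sigma$ of the triggering vectors, and we may write $F(b) = \sum_\sigma \Pr(\sigma)\, F_\sigma(b)$, where $F_\sigma(b)$ is the deterministic number of influenced nodes under $\sigma$. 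As monotonicity and submodularity are closed under non-negative linear combinations, it suffices to prove that each $F_\sigma \colon \mathbb{N}^{M\times n}\to\mathbb{R}_+$ is monotone and submodular over the integer lattice.

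Next I would introduce the aggregation map $\beta\colon \mathbb{N}^{M\times n}\to\mathbb{N}^n$ defined by $\beta(b)_v = \max_{i\in[M]} b^i_v$, i.e., the largest budget any player assigns to $v$. Recalling that a node is seeded exactly when it receives a positive budget from some player, and that its activation attempts are governed by the highest budget it received (the budget of its color's player) --- so that a neighbor $u$ activates $v$ precisely when $\beta(b)_u \ge t^v_u$ --- the colorless diffusion under $\sigma$ with budget matrix $b$ has the same seed set $\{v:\beta(b)_v>0\}$ and the same activation rule as the single-player Budgeted Triggering process run on the vector $\beta(b)$. The two processes therefore reach the identical set of nodes (timing affects only colors, not reachability), so that $F_\sigma(b) = f_\sigma(\beta(b))$, where $f_\sigma$ is the deterministic single-player influence function already shown to be monotone and submodular over $\mathbb{N}^n$ in the proof of Theorem~\ref{thm:submod_proof}.

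It then remains to verify that composing $f_\sigma$ with the max-aggregation $\beta$ preserves monotonicity and submodularity over the matrix lattice. Monotonicity is immediate since both $\beta$ and $f_\sigma$ are monotone. For submodularity the key algebraic facts about $\beta$ are that it commutes with the join, $\beta(x\vee y)=\beta(x)\vee\beta(y)$, while it only sub-commutes with the meet, $\beta(x\wedge y)\le \beta(x)\wedge\beta(y)$ (because $\max_i\min\{x^i_v,y^i_v\}\le \min\{\max_i x^i_v,\max_i y^i_v\}$). Using the join identity, the meet inequality together with monotonicity of $f_\sigma$ on the meet term, and finally the submodularity of $f_\sigma$, one obtains
\[
F_\sigma(x\vee y)+F_\sigma(x\wedge y)\le f_\sigma\big(\beta(x)\vee\beta(y)\big)+f_\sigma\big(\beta(x)\wedge\beta(y)\big)\le f_\sigma(\beta(x))+f_\sigma(\beta(y))=F_\sigma(x)+F_\sigma(y),
\]
as required. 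Equivalently, one may define $g_\sigma$ on the matrix lattice as the single-player $g_\sigma$ precomposed with $\beta$, observe that the join identity for $\beta$ turns coordinate independence of the single-player $g_\sigma$ into coordinate independence of the composed map, and then invoke Lemma~\ref{lem:compose}.

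I expect the main obstacle to be pinning down the correct aggregation. The choice of the coordinate-wise maximum is essential: it is exactly the quantity that drives both the coloring and the activation in the multi-player model, and it is the only aggregation for which coordinate independence survives. A seemingly natural alternative, summing the players' budgets on each node, would destroy the property --- for two players with $x=(1,0)$ and $y=(0,1)$ on a node one has total budget $2$ in $x\vee y$ but only $1$ in each of $x$ and $y$, so a threshold $t=2$ edge would be activated by the join yet by neither argument, violating coordinate independence. Once the maximum aggregation is identified, the colorless reduction and the join/meet calculus above make the argument routine.
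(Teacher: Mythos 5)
Your proof is correct, and it takes a genuinely different route from the paper's. The paper re-runs its two-stage machinery directly in the multi-player setting: it defines a first-stage map $G_\sigma:\mathbb{N}^{M\times n}\to\{0,1\}^n$ (seeded nodes plus first-step influences, with an existential quantifier over players), re-proves its monotonicity and coordinate independence by repeating the argument of Claim~\ref{clm:submod_sigma} nearly verbatim, and then applies Lemma~\ref{lem:compose} with the single-player second-stage function $H_\sigma$. You instead invoke the single-player result as a black box: your identity $F_\sigma=f_\sigma\circ\beta$, with $\beta(b)_v=\max_{i}b^i_v$, holds because the paper's activation condition ``$\exists i,\ b^i_v>0,\ b^i_v\ge t^u_v$'' is equivalent to ``$\beta(b)_v>0$ and $\beta(b)_v\ge t^u_v$'' (the player achieving the maximum witnesses the existential), so in fact $G_\sigma=g_\sigma\circ\beta$ coordinate by coordinate; your join/meet calculus for $\beta$ then pulls the monotone submodularity of $f_\sigma$ (established in Theorem~\ref{thm:submod_proof}) back to $F_\sigma$. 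What your approach buys is modularity and a reusable transfer principle --- any monotone map between integer lattices that preserves joins and sub-commutes with meets pulls back monotone submodular functions to monotone submodular functions, which is the lattice-to-lattice analogue of the paper's Lemma~\ref{lem:compose} --- and it avoids duplicating the case analysis of Claim~\ref{clm:submod_sigma}. What the paper's approach buys is that it never needs to identify the multi-player cascade on $b$ with the single-player cascade on $\beta(b)$ as whole processes; it only compares outcomes stage by stage, keeping the argument self-contained. Both proofs ultimately rest on the same two observations, which you state and justify somewhat more explicitly than the paper's one-sentence assertions: the colors, tie-breaking choices, and delays $T_{uv}$ affect only which player is credited with an influenced node, never whether it is influenced; and only the per-node maximum budget is relevant to activation.
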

\ifthenelse{\equal{\Context}{ABSTRACT}}
{} 
{
\begin{proof}
We prove this lemma along similar lines to those in the proof of Theorem~\ref{thm:submod_proof}, which attains to the single-player scenario. Let $\sigma$ be some fixed choice of triggering vectors of all nodes and all the activation times $T_{uv}$. We also assume that $\sigma$ encodes other random decisions in the model, namely, all tie-breaking choices related to equal (highest) budget assignments for nodes. Let $F_{\sigma}(b)$ be the deterministic number of influenced nodes for the random choices $\sigma$ and the budget allocation $b$, and note that
$F(b) = \sum_{\sigma} \mathrm{Pr}(\sigma) \cdot F_\sigma(b)$.
Similar to Theorem \ref{thm:submod_proof}, it is sufficient to prove that $F_{\sigma}$ is monotone submodular function on the integer lattice. Again, we view the social influence as a two-stage process. In the first step, we consider a function $G_\sigma: \bbN^{M \times n} \to \{0,1\}^n$ that given the budget allocation of all players returns a set $S$ of immediate influenced nodes. Formally,
$$
G_{\sigma}(b) = S \triangleq \big\{v : \exists i,~b^i_v > 0 \big\} \cup 
\big\{u : \exists v \in N(u), \exists i,~b^i_v > 0, b^i_v \geq t^u_v \big\} \ .
$$
In the second stage, we consider the function $H_{\sigma}$ that receives $S$ as its seed set, and makes the original Triggering model interpretation of the vectors, that is, it sets each $T^v = \{u : t_u^v = 0\}$. Notice that the fact that there are multiple players at this stage does not change the social outcome, i.e., the number of influenced nodes, comparing to a single-player scenario. The only difference relates to the identity of the player that affects every node. This implies that $H_{\sigma}$ is monotone and submodular as its result is identical to that of the original (single-player) Triggering model~\cite{KempeKT03}. Observe that $F_\sigma = H_\sigma \circ G_\sigma$ by our construction. Therefore, by Lemma~\ref{lem:compose}, we are left to establish that the function $G_\sigma$ is monotone and coordinate independent. The next claim proves that.

\begin{claim}
The function $G_\sigma$ is monotone and coordinate independent.
\end{claim}
We prove this claim using almost identical line of argumentation to that in Claim~\ref{clm:submod_sigma}. Let $x,y \in \mathbb{N}^{M \times n}$, and denote $w = x \vee y$. We establish coordinate independence by considering every affected node in $G_\sigma(w)$ separately. Recall that $G_\sigma(w)$ consist of the union of two sets $\{v : \exists i,~w^i_v > 0\}$ and $\{u : \exists v \in N(u), \exists i,~w^i_v > 0, w^i_v \geq t^u_v \}$. Consider a node $v$ that has some player $i$ with $w^i_v > 0$. Since $w^i_v = \max\{x^i_v, y^i_v\}$, we know that at least one of $\{x^i_v, y^i_v\}$ is equal to $w^i_v$, say $w^i_v=x^i_v$. Hence, $v \in G_\sigma(x)$, since in particular, player $i$ competes on influencing $v$. Now, consider a node $u \in G_\sigma(w)$ with $w^i_u=0$, for all $i$. It must be the case that $u$ is influenced by one of its neighbors $v$. Clearly, there exists some player $i$ such that $w^i_v > 0$ and $w^i_v \ge t^u_v$. Again, we can assume without loss of generality that $w^i_v = x^i_v$, and get that $u \in G_{\sigma}(x)$, since in particular, player $i$ competes on influencing $u$ via $v$. This implies that for each $v \in G_\sigma(x \vee y)$, either $v \in G_\sigma(x)$ or $v \in G_\sigma(y)$, proving coordinate independence, i.e., $G_\sigma({x} \vee {y}) \leq G_\sigma({x}) \vee  G_\sigma({y})$.

We prove monotonicity in a similar way. Let $x \leq y$. Consider a node $v \in G_\sigma(x)$ such that there is a player $i$ for which $x^i_v > 0$. Since $y^i_v \geq x^i_v > 0$, we know that player $i$ competes on influencing $v$, and thus, $v \in G_\sigma(y)$. Now, consider a node $u \in G_\sigma(x)$ with $x^i_u = 0$, for all $i$. There must be a node $v \in N(u)$ and a player $i$ such that $x^i_v >0$ and $x^i_v \ge t^u_v$. Accordingly, we get that $y^i_v \ge t^u_v$. Therefore, player $i$ also competes on influencing $u$ via $v$, and thus, $u \in G_\sigma(y)$. This implies that $G_\sigma(x) \leq G_\sigma(y)$, which completes the proof.~
\end{proof}
} 

\begin{theorem}
The Budgeted Triggering model with multiple players has a PoA of exactly $2$.
\end{theorem}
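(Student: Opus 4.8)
The plan is to prove the two directions separately: the upper bound $\mathrm{PoA} \le 2$ by reducing the multi-player Budgeted Triggering model to the abstract utility game of Theorem~\ref{th:utilitygame}, and a matching lower bound by exhibiting a family of instances whose price of anarchy tends to $2$.

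For the upper bound I would verify conditions (U1)--(U3). Condition (U1) is exactly Lemma~\ref{lem:social_submod}, which states that $F$ is monotone and submodular over the integer lattice. Condition (U2) holds with equality, since the social utility is defined as $F(b) = \sum_{i=1}^M f^i(b)$; indeed every influenced node carries exactly one player's color, so the total number of influenced nodes equals the sum of the individual influences. The crux is (U3), the Vickrey-type inequality $f^i(b) \ge F(b^i,b^{-i}) - F(0,b^{-i})$. I would prove it per realization $\sigma$ and then take expectations, exactly as in the proof of Lemma~\ref{lem:social_submod}. Fix $\sigma$ and compare the configurations $(b^i,b^{-i})$ and $(0,b^{-i})$; let $D$ be the set of nodes influenced in the former but not the latter, so that $|D| = F_\sigma(b^i,b^{-i}) - F_\sigma(0,b^{-i})$ by monotonicity of $F_\sigma$. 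The key claim is that every node of $D$ is colored $i$ in the configuration $(b^i,b^{-i})$: if some $w \in D$ were colored $j \ne i$, then the activation chain coloring $w$ with color $j$ originates at a node seeded by player $j$ and proceeds through triggering conditions depending only on $j$'s budget and on the fixed choices in $\sigma$, never on player $i$'s budget; hence that same chain survives in $(0,b^{-i})$ and still influences $w$, contradicting $w \in D$. Consequently $D$ is contained in the set of $i$-colored nodes, so $|D| \le f^i_\sigma(b)$. Taking expectation over $\sigma$ yields (U3), and Theorem~\ref{th:utilitygame} then gives $\mathrm{PoA} \le 2$.

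For the lower bound I would use the following family parameterized by the number of players $M$. The network consists of one hub $h$ with $M-1$ attached leaves, each leaf having a triggering value so that a single unit of budget on $h$ activates it, together with $M$ isolated private nodes $p_1,\dots,p_M$. Player $i$ has unit budget and nonzero capacity only on $h$ and on $p_i$, so its only meaningful pure strategies are to place its unit on $h$ or on $p_i$. When all $M$ players place their budget on $h$, the hub is colored by a uniformly random winner among the tied maximal bidders, whose color then propagates to all $M-1$ leaves; each player therefore receives expected payoff $M \cdot \tfrac{1}{M} = 1$, which equals the payoff $1$ it would obtain by switching to its private node $p_i$. Hence this profile is a Nash equilibrium, with social welfare exactly $M$ (only $h$ and its leaves are influenced). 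In contrast, the allocation placing one player on $h$ and the remaining $M-1$ players on their distinct private nodes influences $M + (M-1) = 2M-1$ nodes, giving $\mathrm{PoA} \ge \tfrac{2M-1}{M} = 2 - \tfrac1M$, which tends to $2$.

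Combining the two directions shows that the price of anarchy equals $2$. I expect the main obstacle to be the rigorous justification of (U3): one must argue carefully that removing a single player's budget can only destroy the influence of nodes attributed to that player's color, handling the facts that tie-breaking and activation times are frozen in $\sigma$ and that a node colored by another player is reached by a chain independent of the removed player. The lower bound is then routine once the construction is fixed, the only subtle point being that the clustered profile is a (weak) equilibrium precisely because deviating to a private node yields an identical payoff.
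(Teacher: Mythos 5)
Your proposal is correct and takes essentially the same approach as the paper: the upper bound verifies (U1)--(U3) and invokes Theorem~\ref{th:utilitygame}, with (U3) established per realization $\sigma$ via the same key observation (removing player $i$'s budget cannot destroy the influence of nodes colored by other players, since all activations, timings, and tie-breaks are frozen in $\sigma$), and the lower bound uses the same clustering-on-a-star construction with isolated nodes. The only differences are cosmetic: your chain-survival argument spells out what the paper asserts as ``easy to verify,'' and your instance gives $(2M-1)/M$ with a weak equilibrium versus the paper's $2N/(N+1)$ (due to He and Kempe) with a strict one.
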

\begin{proof}
We begin by demonstrating that the model satisfies conditions U1-U3 of the monotone utility game on the integer lattice. As a result, we can apply Theorem~\ref{th:utilitygame} to attain an upper bound of $2$ on the PoA of the model. Notice that condition (U1) holds by Lemma~\ref{lem:social_submod}. Also, condition (U2) trivially holds by the definition of the social function $F$. 

For the purpose of proving that the model satisfies condition (U3), let $\sigma$ be some fixed choice of triggering vectors of all nodes and all the activation times $T_{uv}$. We also assume that $\sigma$ encodes other random decisions in the model, namely, all tie-breaking choices related to equal (highest) budget assignments for nodes. Let $F_{\sigma}(b)$ be the deterministic number of influenced nodes for the random choices $\sigma$ and the budget allocation $b$. Finally, let $f^i_{\sigma}(b)$ be the deterministic number of nodes influenced by player $i$ for the random choices $\sigma$ and the budget allocation $b$. We next argue that $f^i_{\sigma}(b) \geq F_{\sigma}(b^i,b^{-i}) - F_{\sigma}(0,b^{-i})$, for any $\sigma$. Notice that this implies condition (U3) since 
$$
f^i(b) = \sum_{\sigma} \mathrm{Pr}(\sigma)f^i_{\sigma}(b)
\geq \sum_{\sigma} \mathrm{Pr}(\sigma) \left[F_{\sigma}(b^i,b^{-i}) - F_{\sigma}(0,b^{-i})\right] 
= F(b^i,b^{-i}) - F(0,b^{-i}) \ .
$$
We turn to prove the above argument. Notice that it is sufficient to focus only on cases that $F_{\sigma}(b^i,b^{-i}) > F_{\sigma}(0,b^{-i})$, since otherwise, the argument is trivially true as $f^i_{\sigma}(b) \geq 0$.
We concentrate on all nodes $u$ that are not influenced by any player when the mutual strategy is $(0,b^{-i})$, but became influenced for a strategy $(b^i,b^{-i})$. We claim that all those nodes must be assigned the color of player $i$. It is easy to verify that increasing the budget assignment of a player to any node can only negatively affect other players, that is, they may only influence a subset of the nodes. This follows as all the activation results are deterministically encoded in the choices $\sigma$, so adding a competition can only make the outcome worse, i.e., players may not affect a node that they previously did. This implies the claim. As a result, $f^i_{\sigma}(b) \geq F_{\sigma}(b^i,b^{-i}) - F_{\sigma}(0,b^{-i})$. This completes the proof that the model is an instance of the monotone utility game on the integer lattice, and thus, has a PoA of at most $2$.

We proceed by proving the tightness of the PoA result. We show there is an instance of the multi-player Budget Triggering model whose PoA is $2N / (N+1)$. Notice that as $N \to \infty$, the lower bound on the PoA tends to $2$. This instance has been presented in a slightly different context by He and Kempe~\cite[Proposition~1]{HeK13}. Concretely, the input graph is a union of a star with one center and $N$ leaves, and $N$ additional (isolated) nodes. The triggering vectors are selected from a degenerate distribution that essentially implies that each activated node also activates all of its neighbors. Every player has one unit of budget. One can easily verify that the solution in which all players assign their unit budget to the center of the star is a NE. This follows since the expected payoff for each player is $(N+1)/N$, while unilaterally moving the budget to any other node leads to a payoff of $1$. However, the strategy that optimizes the social utility is to place one unit of budget at the center of the star graph, and the remaining budget units at different isolated nodes.~
\end{proof}

\section{An Online Model}
We study the online version of the budgeted influence propagation model. This setting can capture scenarios in which the social influences in a network are known in advance, but the (subset of) agents that will arrive and their order is unknown. The input for this setting is identical to that of the offline variant with the exception that the $n$ agents arrive in an online fashion. This intuitively means that we do not know the monotone submodular influence function $f: \mathbb{N}^n \rightarrow \mathbb{R}_{+}$ in advance, but rather, it is revealed to us gradually with time. More specifically, upon the arrival of the $i$th agent, we can infer the (constrained) function $f_i$, which quantifies the influence of $f$ for the set of the first $i$ agents, while fixing the budget of all other agents to $0$. Note that we also learn the maximum budget $c_i$ that can be allocated to agent $i$ whenever she arrives. For every arriving agent $i$, the algorithm needs to make an irrevocable decision regarding the amount of budget $b_i$ allocated to that agent without knowing the potential contribution of future arriving agents. As mentioned in the introduction, this problem is a generalization of the classical secretary problem. This immediately implies that any online algorithm preforms very poorly under an unrestricted adversarial arrival of the agents. We therefore follow the standard assumption that the agents and their influence are fixed in advanced, but their order of arrival is random.
Note that the overall influence of some budget allocation to the agents is not affected by the arrival order of the agents.

We analyze the performance of our algorithm, $\alg$, using the competitive analysis paradigm. Note that competitive analysis focuses on quantifying the cost that online algorithms suffer due to their complete lack of knowledge regarding the future, and it does not take into account computational complexity. Let $\opt$ be an optimal algorithm for the offline setting. Given an input instance $I$ for the problem, we let $\opt(I)$ and $\alg(I)$ be the influence values that $\opt$ and $\alg$ attain for $I$, respectively. We say that $\alg$ is $c$-competitive if $\inf_I \mathbb{E}[\alg(I)]/\opt(I) \geq c$, where $\mathbb{E}[\alg(I)]$ is the expected value taken over the random choices of the algorithm and the random arrival order of the agents. We like to note that our algorithm and its analysis are inspired by the results of \name{Feldman et al.~}\cite{FeldmanNS11} for the submodular knapsack secretary problem. However, we make several novel observations and identify some interesting structural properties that enable us to simultaneously generalize and improve their results. Also note that in the interests of expositional simplicity, we have not tried to optimize the constants in our analysis.

\begin{theorem}
\label{thm:online_main}
There is an online randomized algorithm that achieves $1/(15e) \approx 0.025$-competitive ratio for the budgeted influence maximization problem.
\end{theorem}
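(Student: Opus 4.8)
The plan is to adapt the Feldman--Naor--Schwartz framework for the submodular knapsack secretary problem to our integer-lattice setting, exploiting the structural results we already have (monotonicity and lattice-submodularity of $f$). The overall competitive ratio $1/(15e)$ decomposes multiplicatively: a constant-factor loss from a sampling/observation phase (the secretary-style random split of the arrival sequence), a further constant from reducing to a well-structured subproblem, and a $1/e$ factor from the probability that a designated ``threshold'' element falls in the right place. I would organize the argument around these three sources of loss so that their product yields $15e$ in the denominator.

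\medskip

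\textbf{First, a dichotomy on the optimal solution.} Following the standard approach, I would split the analysis according to whether $\opt$ is dominated by a single high-value agent-budget pair or is spread across many. Define a ``large'' allocation to be one in which some single agent $i$ receives budget $b_i$ whose marginal contribution is a constant fraction of $\opt(I)$, and ``small'' otherwise. In the large case, a simple secretary-style rule---observe a constant fraction of the arrivals, record the best single agent-budget marginal seen, then pick the first subsequent pair exceeding that threshold---captures a constant fraction of $\opt$ with constant probability, using the classical $1/e$ argument that the best element survives the sampling phase and is selected. In the small case, no single pair is too valuable, so a greedy/thresholding procedure that packs the knapsack can be charged against $\opt$ using the lattice-submodularity guarantee (Lemma~\ref{lm:submod_property_1}) together with the $(1-1/e)$ offline approximation of Theorem~\ref{th:compose} applied to the sampled prefix as a proxy for the whole instance.

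\medskip

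\textbf{Second, the randomized combination.} I would run the large-case and small-case subroutines with appropriately chosen probabilities, so that for every input at least one of the two is guaranteed to be constant-competitive, and the random choice of which to run loses only a further constant factor. The key structural observation enabling the \emph{improvement} over $1/(20e)$ is that allocating budget to an agent is an integer-lattice operation rather than a binary set-inclusion: when we fix the budgets of the first $i$ agents and learn $f_i$, the marginal $f(x \vee k\chi_i) - f(x)$ behaves submodularly in the sense of Lemma~\ref{lm:submod_property_1}, which lets us estimate each agent's best achievable marginal from the sample and threshold against it more tightly than in the pure set setting. I would isolate this as a clean structural lemma before invoking it, so the competitive bound follows by arithmetic.

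\medskip

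\textbf{The main obstacle} will be the small-value case: unlike the classical knapsack secretary problem, here each arriving agent carries an entire \emph{profile} of marginals (one per feasible budget level up to $c_i$), and the irrevocable decision must commit to a single budget for that agent without seeing future profiles. The difficulty is to argue that a threshold learned from the sample remains a good yardstick when applied to integer budget choices, and that the knapsack packing against this threshold does not overshoot the budget $B$ while still collecting a constant fraction of $\opt$. I expect to handle this by comparing the algorithm's allocation to the optimal one on the post-sample suffix via a density (value-per-unit-budget) argument, charging lost value to either the budget exhausted or to pairs that failed the threshold, and bounding both using lattice-submodularity. The remaining steps---verifying feasibility, computing the exact constants, and taking expectation over the random arrival order to obtain the $1/(15e)$ figure---are routine once the structural lemma and the case split are in place.
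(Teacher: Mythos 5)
Your plan follows essentially the same route as the paper: a dichotomy on whether a single agent--budget pair carries a constant fraction of the optimum, the classical secretary algorithm for that ``large'' case, a sample-then-threshold density algorithm for the ``small'' case, and a randomized mix of the two subroutines. Your sketch of the charging argument for the threshold algorithm --- lost value is charged either to exhausted budget or to agent--budget pairs that failed the threshold, with the weak diminishing-returns property (Lemma~\ref{lm:submod_property_1}) justifying the charges --- is exactly the paper's Lemma~\ref{lem:lower_bound}.

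The genuine gap is in what you dismiss as routine. Taking the expectation over the random arrival order is the technical core of the proof, not bookkeeping. The charging lemma only yields the random bound $f(b) \geq f(\opt^*) \cdot \min\{\alpha Y/2,\ 1-\alpha-Y\}$, where $Y$ is the fraction of the optimal value that lands in the sampled prefix. If $Y$ is close to $0$, the learned threshold is worthless; if $Y$ is close to $1$, there is nothing left to collect after the sample. A constant competitive ratio therefore requires showing that $Y$ stays near $1/2$ with constant probability, and the only lever for that is the small-case hypothesis itself: since each agent's marginal contribution $w_i$ is at most $\beta f(\opt^*)$, the paper bounds $\mathrm{Var}[Y] \leq \beta/4$, applies Chebyshev's inequality together with the symmetry of $Y$ around $1/2$, and integrates the resulting tail bound against the min expression to obtain $\mathbb{E}[f(b)] \geq \frac{f(\opt^*)}{20}\left(1-\sqrt{\beta}\right)^2$ (Lemma~\ref{lem:expected}). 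This is where the dichotomy parameter actually does its work --- it controls the variance of the random split, not merely which subroutine wins --- and it is what dictates the final constants. Without this concentration step your outline produces no bound at all. A secondary error: your claimed source of the improvement over $1/(20e)$ (that lattice marginals can be thresholded ``more tightly'' than set marginals) is backwards --- the lattice setting is strictly more general, and the paper improves the constant despite that, via tuned parameters (sampling half the stream, $\alpha = 2/5$, the $3/8$--$5/8$ mix) and a sharper analysis of precisely the expectation step you skipped.
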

\begin{proof}
Recall that an instance of the online budgeted influence maximization problem consists of a set of $n$ agents that arrive in a random order, a budget constraint $B \in \mathbb{N}_+$, capacity constraints $c \in \mathbb{N}_+^{n}$, and a monotone submodular influence function over the integer lattice $f: \mathbb{N}^{n} \rightarrow \mathbb{R}_+$. We begin by describing the main component of our algorithm. This component is built to address the case that the contribution of each agent is relatively small with respect to the optimal solution. That is, even when one assigns the maximum feasible budget to any single agent, the contribution of that agent is still small compared to the optimum. We refer to this component as \emph{light influence} algorithm (abbreviated, LI). This component will be later complemented with another component, derived from the classical {\em secretary} algorithm, to deal with highly influential agents.

Let $\langle a_1, a_2, \ldots, a_n \rangle$ be an arbitrary fixed ordering of the set of agents. This is not necessarily the arrival order of the agents. Algorithm light influence, formally described below, assumes that each agent $a_i$ is assigned a uniform continuous random variable $t_i \in [0,1)$ that determines its arrival time. Note that this assumption does not add restrictions on the model since one can create a set of $n$ samples of the uniform distribution from the range $[0,1)$ in advance, and assign them by a non-decreasing order to each arriving agent (see, e.g., the discussion in~\cite{FeldmanNS11}). 

The algorithm begins by exploring the first part of the agent sequence, that is, the agents in $L = \{a_i : t_i \leq 1/2\}$. Note that it does not allocate any budget to those agents. Let $b^L$ be an optimal (offline) budget allocation for the restricted instance that only consists of the agents in $L$, and let $f(b^L)$ be its influence value. Furthermore, let $f(b^L) / B$ be a lower bound on the average contribution of each unit of budget in that solution. The algorithm continues by considering the remainder of the sequence. For each arriving agent, it allocates a budget of $k$ if the increase in the overall influence value is at least $\alpha k f(b^L) / B$, for some fixed $\alpha$ to be determined later. That is, the average influence contribution of an each unit of budget is (up to the $\alpha$-factor) at least as large as the average unit contribution in the optimal solution for the first part. If there are several budget allocations that satisfy the above condition then the algorithm allocates the maximal amount of budget that still satisfies the capacity and budget constraints.

Prior to formally describing our algorithm, we like to remind that $\chi_i$
corresponds to the characteristic vector of $a_i$, i.e., $(\chi_i)_i = 1$ and  $(\chi_i)_j = 0$ for every $j \neq i$. Accordingly, if $b \in \mathbb{N}^{n}$ is a budget allocation vector in which the $i$th coordinate represents the allocation to agent $a_i$, and $b_i=0$, then the allocation $b \vee k \chi_i$ corresponds to an update of $b$ by adding a budget $k$ to agent $a_i$. We say that the \emph{marginal value} of assigning $k$ units of budget to $a_i$ is $f(b \vee k \chi_i) - f(b)$, and the \emph{marginal value per unit} is $(f(b \vee k \chi_i)- f(b)) / k$.

\begin{algorithm}[t]
\DontPrintSemicolon
\SetKwInOut{Input}{Input}
\SetKwInOut{Output}{Output}
\Input{an online sequence of $n$ agents, a budget constraint $B \in \mathbb{N}_+$, capacity constraints $c \in \mathbb{N}_+^{n}$, a monotone submodular function $f: \mathbb{N}^{n} \rightarrow \mathbb{R}_+$, a parameter $\alpha \in \mathbb{R}_+$}
\Output{A budget allocation $b$} \smallskip

$b \leftarrow (0,0,\ldots,0)$

$f(b^L) \leftarrow$ value of the optimal budget allocation for agents in $L= \{a_i : t_i \leq 1/2\}$

\For {\emph{every agent} $a_i$ \emph{such that} $t_i \in (1/2, 1]$}
{
$K_i \leftarrow \big\{k \le c_i :~f({b} \vee k \chi_i) - f(b) \geq~\alpha k f({b}^L)/ B \big\}  \cup \big\{k 
+ \sum_{j \neq i} b_j \leq B \big\}$\\

\If{$K_i \neq \emptyset$}
{
$k \leftarrow \max_{k}\{K_i\}$\\
$b \leftarrow b \vee k \chi_i$
}
}
\textbf{return} $b$
\caption{Light Influence (LI)\label{alg: online}} 
\end{algorithm}

Having described our main component, we are now ready to complete the description of our algorithm. As already , we randomly combine algorithm LI with the classical algorithm for the secretary problem. Specifically, algorithm LI is employed with probability $5/8$ and the classical secretary algorithm with probability $3/8$. This latter algorithm assigns a maximal amount of budget to a single agent $a_i$ to attain an influence value of $f(c_i \chi_i)$. The algorithm selects $a_i$ by disregarding the first $n/e$ agents that arrive, and then picking the first agent whose influence value is better than any of the values of the first $n/e$ agents. This optimal algorithm is known to succeed in finding the single agent with the best influence with probability of $1/e$~\cite{Dynkin63}. 

\begin{algorithm}[t]
\DontPrintSemicolon
\SetKwInOut{Input}{Input}
\SetKwInOut{Output}{Output}
\Input{an online sequence of $n$ agents, a budget constraint $B \in \mathbb{N}_+$, capacity constraints $c \in \mathbb{N}_+^{n}$, a monotone submodular function $f: \mathbb{N}^{n} \rightarrow \mathbb{R}_+$, a parameter $\alpha \in \mathbb{R}_+$}
\Output{A budget allocation $b$} \smallskip

$r\leftarrow$ random number in $[0,1]$ \;
\uIf{$r \in [0,3/8]$}{$b \leftarrow$ run the classical secretary algorithm with $(n,B,c,f)$}
\ElseIf{$r \in (3/8,1]$}{$b \leftarrow$ run algorithm LI with $(n,B,c,f,\alpha)$}

\textbf{return} $b$
\caption{Online Influence Maximization \label{alg:onl_combine}}
\end{algorithm}

\subsection{Analysis}
We begin by analyzing the performance guarantee of algorithm LI, and later analyze the complete algorithm. Let $\opt^* =[\opt^*_1, \dots, \opt^*_n]$ be the optimal budget allocation for a given instance, and let $\opt^L$ be the budget allocation for the agents in $L$, that is, $\opt^L_i = \opt^*_i$ whenever $i \in L$ and $\opt^L_i = 0$, otherwise. Similarly, $\opt^R$ is the budget allocation for the agents in $R = [n] \setminus L$, i.e., $\opt^R_i = \opt^*_i$ for $i \in R$, and $\opt^R_i = 0$ for $i \notin R$. Recall that algorithm LI attends to the case in which no single agent has a significant influence contribution compared to the optimal value. More formally, let $\beta = \max_i f(c_i\chi_i) / f(\opt^*)$ be the ratio between the maximal contribution of a single agent and the optimal value.

\begin{lemma}\label{lem:lower_bound}
If $\alpha \geq 2\beta$ then $f(b) \geq \min\{\alpha f(\opt^L) / 2, f(\opt^R)- \alpha f(\opt^*)\}$.
\end{lemma}
\ifthenelse{\equal{\Context}{ABSTRACT}}
{} 
{
\begin{proof}
We prove this lemma by bounding the expected influence value of the algorithm in two cases and taking the minimum of them:

\smallskip \noindent {Case I: Algorithm LI allocates a budget of more than $B/2$ units.} We know that the algorithm attains a value of at least $\alpha f(b^L)/ B$ from each allocated budget unit by the selection rule $f(b \vee k_i\chi_i) - f(b) \geq \alpha k f(b^L)/ B$. Hence, the total influence of this allocation is at least
$$
f(b) > \frac{B}{2} \cdot \frac{\alpha f(b^L)}{B} = \frac{\alpha f(b^L)}{2} \geq \frac{\alpha f(\opt^L)}{2} \ .
$$

\smallskip \noindent {Case II: Algorithm LI allocates at most ${B/2}$ budget units.} We utilize the following lemma proven in~\cite[Lem~2.3]{SomaKIK14}.
\begin{lemma}
\label{lm: submod_property_2}
Let $f$ be a monotone submodular function over the integer lattice. For arbitrary $x,y$,
$$
f(x \vee y) \leq f(x) + \sum_{\substack{i\in [n]:\\ y_i > x_i}} \big(f(x \vee y_i\chi_i) - f(x)\big) \ .
$$~
\end{lemma}
This lemma applied to our case implies that 
\begin{equation} \label{online_2} 
f(b \vee \opt^R) \leq f(b) + \sum_{\substack{i\in [n]:\\\opt^R_i > b_i}}
\big(f(b \vee \opt^R_i \chi_i)-f(b)\big) \ .
\end{equation}

We consider two sub-cases:

\smallskip \noindent {Subcase A: There is $\ell \in [n]$ such that $\opt^R_{\ell}> B/2$.} Clearly, there can only be one agent $\ell$ having this property. One can easily validate that $f(b \vee \opt^R_\ell \chi_\ell)-f(b) \leq \beta \cdot f(\opt^*)$ by the definition of $\beta$ and Lemma~\ref{lm:submod_property_1}. Now, consider any agent $i \neq \ell$ with $\opt^R_i > b_i$. 
The reason that the optimal solution allocated more budget to $i$ than our algorithm cannot be the lack of budget since $\sum_i b_i < B/2$ and $\opt^R_{i}< B/2$. Hence, it must be the case that
\begin{equation}
\label{under_threshold}
\frac{f(b \vee \opt^R_i\chi_i)-f(b)}{\opt^R_i} < \alpha \frac{f(b^L)}{B} \ ,	
\end{equation}
by the selection rule of the algorithm. Note that $b$ in the above equation designates the budget allocation at the time that the agent $a_i$ was considered and not the final allocation. However, due to the weak version of marginal diminishing returns that was described in Lemma~\ref{lm:submod_property_1}, the inequality also holds for the final allocation vector. As a result,
\begin{align*}
	f(\opt^R) &\leq f(b \vee \opt^R)\\
	&\leq f(b) + \left(f(b \vee \opt^R_\ell \chi_\ell)-f(b)\right) + \sum_{\substack{i \in [n]\setminus \{\ell\}:\\ \opt^R_i > b_i}}
	\left(f(b \vee \opt^R_i \chi_i)-f(b)\right) \\ 
	&\leq f(b) + \beta f(\opt^*) + \alpha \frac{f(b^L)}{B} \cdot \frac{B}{2}\\ 
	&\leq f(b) + f(\opt^*) \cdot \left(\beta + \frac{\alpha}{2}\right)  , 
\end{align*}
where the first inequality follows due to the monotonicity of $f$, and the third inequality uses the sub-case assumption that there is one agent that receives at least half of the overall budget in $\opt^R$, and thus, $\sum_{i\neq \ell} \opt^R_{i} \leq B/2$. Recall that $\alpha \geq 2\beta$, and thus, $f(b) \geq f(\opt^R) - \alpha f(\opt^*)$.

\smallskip \noindent {Subcase B: $\opt^R_{i}\leq B / 2$, for every $i \in [n]$.} The analysis of this sub-case follows the same argumentation of the previous sub-case. Notice that for every agent $i \in [n]$ such that $\opt^R_i > b_i$, we can apply inequality~(\ref{under_threshold}). Consequently, we can utilize inequality~(\ref{online_2}), and get that 
$$
f(\opt^R) \leq f(b \vee \opt^R) \leq f(b) + \sum_{\substack{i \in [n]:\\ \opt^R_i > b_i}} \big(f(b \vee \opt^R_i \chi_i)-f(b)\big)
\leq f(b) + \alpha \frac{f(b^L)}{B} \cdot B  \ ,
$$
which implies that $f(b) \geq f(\opt^R) - \alpha f(\opt^*)$.~
\end{proof}
} 

Recall that we considered some arbitrary fixed ordering of the agents $\langle a_1,a_2,\dots,a_n \rangle$ that is not necessary their arrival order. Let $w_i$ the marginal contribution of agent $a_i$ to the optimal value when calculated according to this order. Namely, let $\opt^*_{< i} =[\opt^*_1, \dots, \opt^*_{i-1}, 0, \ldots, 0]$ be the allocation giving the same budget as $\opt^*$ for every agent $a_j$ with $j < i$, and $0$ for the rest, and define $w_i = f(\opt^*_{<i} \vee \opt^*_i\chi_i) - f(\opt_{<i})$. This point of view allow us to associate fixed parts of the optimal value to the agents in a way that is not affected by their order of arrival. Let $X_i$ be a random indicator for the event that $a_i \in L$, and let $W =\sum_{i=1}^n w_i X_i$. Let $\alpha \geq 2\beta$ to be determined later. 

By the weak version of marginal diminishing returns specified in  Lemma~\ref{lm:submod_property_1}, it holds that $f(\opt^L) \geq W$, and similarly, $f(\opt^R) \geq \sum_{i = 1}^n w_i(1 - X_i) = f(\opt^*) - W$. Using this observation, in conjunction with Lemma~\ref{lem:lower_bound}, we get that
$f(b) \geq \min\{\alpha W / 2, f(\opt^*) \cdot(1 - \alpha -{W}/{f(\opt^*)})\}$. Let $Y = W/ f(\opt^*)$, and observe that 
\begin{equation}
\label{lower_bound_2}
f(b) \geq f(\opt^*) \cdot \min\{ \alpha Y / 2, 1- \alpha -Y \} \ .
\end{equation}
Note that $Y \in [0,1]$ captures the ratio between the expected optimum value associated with the agents in $L$ and the (overall) optimum value. We continue to bound the expected value of $f(b)$ by proving the following lemma.
\begin{lemma}
\label{lem:expected}
Let $\alpha = 2/5$ and assume that $\beta \leq 1/5$, then, 
$$\mathbb{E}[f(b)] \geq \frac{f(\opt^*)}{20}\cdot \left(1-\sqrt{\beta}\right)^2.$$
\end{lemma}
\ifthenelse{\equal{\Context}{ABSTRACT}}
{} 
{
\begin{proof}
By assigning $\alpha = 2/5$ to the bound in inequality~\ref{lower_bound_2}, we obtain that
$$
f(b) \geq f(\opt^*) \cdot \min\left\{\frac{Y}{5}, \frac{3}{5}-Y\right\}.
$$
Notice that the expected value of $f(b)$ is
$$
\label{expected}
\mathbb{E}[f(b)] \geq f(\opt^*) \int_{0}^{\frac{3}{5}} [\mathrm{Pr}(Y \leq \gamma)]' \cdot \min \left\{ \frac{\gamma}{5}, \frac{3}{5}-\gamma \right\} d\gamma \ ,
$$
since $[\mathrm{Pr}(Y \leq \gamma)]'$ is the probability density function of $Y$. Now, observe that we can split the integral range into two parts
\begin{align}
\label{two_int}
\mathbb{E}[f(b)] &\geq {f(\opt^*)} \int_{0}^{\frac{1}{2}} [\mathrm{Pr}(Y \leq \gamma)]' \frac{\gamma}{5} d\gamma + f(\opt^*)\int_{\frac{1}{2}}^{\frac{3}{5}} [\mathrm{Pr}(Y \leq \gamma)]' \left(\frac{3}{5}-\gamma\right) d\gamma \nonumber \\ 
&\geq \frac{f(\opt^*)}{5}\int_{0}^{\frac{1}{2}} [\mathrm{Pr}(Y \leq \gamma)]' {\gamma} d\gamma.
\end{align}
To bound $\mathrm{Pr}(Y \leq \gamma)$, we use Chebyshev's inequality, while noting that 
$$
\mathbb{E}[Y] = \sum_{i=1}^n w_i \mathbb{E}[X_i] / f(\opt^*) = W / (2 f(\opt^*)) = 1/2 \ ,
$$ 
since $\mathbb{E}[X_i] = 1/2$ and $W = f(\opt^*)$. Now,
$$
\label{prob_Y}
\mathrm{Pr}\left[\Big|Y - \frac{1}{2}\Big| \geq c\right] \leq \frac{\mathrm{Var}[Y]}{c^2} \leq \frac{\beta}{4c^2} \ ,
$$
where the last inequality follows from \cite[Lem~B.5]{FeldmanNS11}. For completeness, the proof of this lemma appears as Lemma~\ref{cl:var_y} in the Appendix. Now, observe that $Y$ is symmetrically distributed around $1/2$, and therefore, $\mathrm{Pr}(Y \leq \frac 1 2 - c) = \mathrm{Pr}(Y \geq \frac 1 2 + c) \leq \beta / (8c^2)$. This implies that 
for every $\gamma \leq 1/2$,
$$
\mathrm{Pr}(Y \leq \gamma) \leq  \frac{\beta}{8(\frac{1}{2}-\gamma)^2}.
$$
Note that we cannot simply plug this upper bound on the cumulative distribution function into inequality~(\ref{two_int}). The fact that $Y$ is symmetrically distributed around $1/2$ implies that $\int_0^{1/2}\left[\mathrm{Pr}(Y \leq \gamma)\right]'d\gamma = 1/2$, and this does hold with this bound. To bypass this issue, and maintain the later constrain, we decrease the integration range. One can easily verify that  
$$
\int_0^{\frac{1 - \sqrt{\beta}}{2}} \left[\frac{\beta}{8(\frac{1}{2}-\gamma)^2}\right]'d\gamma =\frac 1 2 \ , 
$$
and as a result, we can infer that
\begin{align*}
\int_{0}^{\frac{1}{2}} [\mathrm{Pr}[Y \leq \gamma]]' {\gamma} d\gamma \geq 
\int_{0}^{\frac{1 - \sqrt{\beta}}{2}} \left[\frac{\beta}{8(\frac{1}{2}-\gamma)^2}\right]' {\gamma} d\gamma. 
\end{align*}
Specifically, this inequality holds since we essentially considered the worst distribution (from an algorithms analysis point of view) by shifting probability from higher values of $Y$ to smaller values (note that multiplication by $\gamma$). The proof of the lemma now follows since 
\begin{align*}
\mathbb{E}[f(b)] &\geq \frac{f(\opt^*)}{5} \int_{0}^{\frac{1 - \sqrt{\beta}}{2}} \left[\frac{\beta}{8(\frac{1}{2}-\gamma)^2}\right]'{\gamma} d\gamma 
\\ 
&= \frac{\beta \cdot f(\opt^*)}{20}\int_{0}^{\frac{1 - \sqrt{\beta}}{2}} \frac{\gamma}{(\frac{1}{2}-\gamma)^3}  d\gamma 
\\ 
&= \frac{\beta \cdot  f(\opt^*)}{20} \left[ \frac{4 \gamma - 1}{(1-2\gamma)^2}\right]^{\frac{1 - \sqrt{\beta}}{2}}_0 
\\ 
&=\frac{f(\opt^*)}{20}\cdot \left(1-\sqrt{\beta}\right)^2 \ .
\end{align*}
\end{proof}
} 
Recall that $\beta = \max_i f(c_i\chi_i)/ f(\opt^*)$. We next consider two cases depending on the value of $\beta$. When $\beta > 1/5$, our algorithm executes the classical secretary algorithm with probability $3/8$. This algorithm places a maximal amount of budget on the agent having maximum influence, $\max_i f(c_i\chi_i)$, with probability $1/e$. Consequently, 
$$
\mathbb{E}[f(b)] \geq \frac{3}{8}\cdot \frac{\beta \cdot f(\opt^*)}{e} > \frac{3f(\opt^*) }{40e} >  \frac{f(\opt^*) }{15e} \ .
$$
When $\beta \leq 1/5$, we know that our algorithm executes the classical secretary algorithm with probability $3/8$, and algorithm LI with probability $5/8$. Utilizing Lemma~\ref{lem:expected} results in
$$
\mathbb{E}[f(b)] \geq \frac{3}{8}\cdot \frac{\beta f(\opt^*)}{e} + \frac{5}{8}\cdot \frac{f(\opt^*)}{20}\cdot \left(1-\sqrt{\beta}\right)^2 
= \left(\frac{3\beta}{8e} + \frac{5}{160}\left(1-\sqrt{\beta}\right)^2 \right) \cdot f(\opt^*).
$$
One can validate that this latter term is minimized for $\beta = 1 /(12/e + 1)^2 \approx 0.034$, which implies that
$$
\mathbb{E}[f(b)] \geq \frac{3}{96+8e} f(\opt^*)> \frac{f(\opt^*)}{15e} \ .
$$
This completes the proof of the theorem.~
\end{proof} 

 

\bibliographystyle{plain}
\bibliography{BIP}

\ifthenelse{\equal{\Context}{ABSTRACT}}
{} 
{
\appendix
\section{Additional details}
The following claim was established by~\cite[Lem~B.5]{FeldmanNS11}. We include it here for completeness. 
\begin{lemma} \label{cl:var_y}
Consider the random variable $Y = \sum_i w_i X_i/f(\opt^*)$, defined in the proof of Theorem~\ref{thm:online_main}. Its variance is $\mathrm{Var}[Y] \leq \beta/ 4$.
\end{lemma}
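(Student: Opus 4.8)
The plan is to exploit the mutual independence of the indicators $X_i$ to collapse the variance of $Y$ into a weighted sum of squared marginal contributions, and then to control that sum using the weak diminishing-returns property of Lemma~\ref{lm:submod_property_1}. First I would recall that each $X_i$ is the indicator of the event $t_i \leq 1/2$, where the arrival times $t_i$ are drawn independently and uniformly from $[0,1)$; hence the $X_i$ are independent Bernoulli random variables with $\mathbb{E}[X_i] = 1/2$ and $\mathrm{Var}[X_i] = 1/4$. Since $Y = \frac{1}{f(\opt^*)}\sum_i w_i X_i$ and the $w_i$ are deterministic constants, independence gives $\mathrm{Var}[Y] = \frac{1}{f(\opt^*)^2}\sum_i w_i^2\,\mathrm{Var}[X_i] = \frac{1}{4 f(\opt^*)^2}\sum_i w_i^2$. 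Thus the lemma reduces to establishing $\sum_i w_i^2 \leq \beta\, f(\opt^*)^2$.

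The main technical step, and the one I expect to carry the real content, is the pointwise bound $w_i \leq \beta\, f(\opt^*)$ on each marginal. Recall $w_i = f(\opt^*_{<i} \vee \opt^*_i\chi_i) - f(\opt^*_{<i})$. Applying Lemma~\ref{lm:submod_property_1} with $x = 0 \leq y = \opt^*_{<i}$ and $k = \opt^*_i$ yields $w_i \leq f(\opt^*_i\chi_i) - f(0) \leq f(\opt^*_i\chi_i)$, i.e.\ the marginal at the larger base point is dominated by the marginal starting from the empty allocation. Since $\opt^*_i \leq c_i$, monotonicity of $f$ then gives $f(\opt^*_i\chi_i) \leq f(c_i\chi_i) \leq \max_j f(c_j\chi_j) = \beta\, f(\opt^*)$, where the final equality is just the definition of $\beta$.

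With the uniform bound in hand, I would finish by the standard estimate $\sum_i w_i^2 \leq \big(\max_i w_i\big)\sum_i w_i$. The telescoping identity $\sum_i w_i = f(\opt^*) - f(0) = f(\opt^*)$ (the same identity used to justify $W = f(\opt^*)$ in the proof of Theorem~\ref{thm:online_main}) together with $\max_i w_i \leq \beta\, f(\opt^*)$ gives $\sum_i w_i^2 \leq \beta\, f(\opt^*)\cdot f(\opt^*) = \beta\, f(\opt^*)^2$. Substituting back yields $\mathrm{Var}[Y] \leq \frac{1}{4 f(\opt^*)^2}\cdot \beta\, f(\opt^*)^2 = \beta/4$, as claimed. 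Every step except the marginal bound is routine; the only place where the structure of the problem is genuinely used is the appeal to Lemma~\ref{lm:submod_property_1} to replace the awkward base point $\opt^*_{<i}$ by the empty allocation.
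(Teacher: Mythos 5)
Your proof is correct and follows essentially the same route as the paper's: use independence of the $X_i$ to get $\mathrm{Var}[Y] = \sum_i w_i^2/(4f^2(\opt^*))$, bound $\sum_i w_i^2 \leq (\max_i w_i)\sum_i w_i$, and combine the telescoping identity $\sum_i w_i = f(\opt^*)$ with $\max_i w_i \leq \beta f(\opt^*)$. The only difference is that you explicitly justify $\max_i w_i \leq \beta f(\opt^*)$ via Lemma~\ref{lm:submod_property_1} and monotonicity, a step the paper's appendix proof leaves implicit; your justification is exactly the right one.
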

\begin{proof}
\begin{align*}
Var[Y]&=  \frac{Var[\sum_i w_i X_i]}{f^2(\opt^*)} = \frac{\sum_i w_i^2 Var[X_i]}{f^2(\opt^*)} = \frac{\sum_i w_i^2}{4f^2(\opt^*)} \\
&\leq  \frac{\max_i w_i \cdot \sum_i w_i}{4f^2(\opt^*)} =\frac{\max_i w_i \cdot f(\opt^*)}{4f^2(\opt^*)} \leq \frac{\beta}{4} \ .
\end{align*}~
\end{proof}
} 

\end{document}